\documentclass[reqno]{amsart}
\usepackage{amsmath,amssymb,amsthm}
\usepackage[longnamesfirst,numbers]{natbib}
\usepackage[margin]{fixme}
\usepackage{setspace}
\usepackage{enumerate}         % better lists
%\usepackage[section]{placeins} % print figures in section where they are defined
%\usepackage{rotating}
%\usepackage{arydshln} % fancy tables
%\usepackage{stmaryrd} % stochastic intervals

%\ifx\@draft\@undefined \else \usepackage{showkeys}\fi

% ----------------------------------------------------------------
\vfuzz2pt % Don't report over-full v-boxes if over-edge is small
\hfuzz2pt % Don't report over-full h-boxes if over-edge is small
% THEOREMS -------------------------------------------------------
\newtheorem{thm}{Theorem}[section]

\newtheorem{lem}[thm]{Lemma}
\newtheorem{prop}[thm]{Proposition}
\theoremstyle{definition}
\newtheorem{defn}[thm]{Definition}

\theoremstyle{remark}
\newtheorem{rem}[thm]{Remark}

\numberwithin{equation}{section}

\newcommand{\norm}[1]{\left\Vert#1\right\Vert}

\newcommand{\set}[1]{\left\{#1\right\}}
\newcommand{\Ind}[1]{\mathbf{1}_{\left\{#1\right\}}}
\newcommand{\RR}{\mathbb{R}}

\newcommand{\PP}{\mathbb{P}}
\newcommand{\CC}{\mathbb{C}}

\newcommand{\FF}{\mathbb{F}}

\newcommand{\cU}{\mathcal{U}}

\newcommand{\cF}{\mathcal{F}}

\newcommand{\cQ}{\mathcal{Q}}

\newcommand{\Rplus}{\mathbb{R}_{\geqslant 0}}

\newcommand{\pd}[2]{\frac{\partial #1}{\partial #2}}

\newcommand{\scal}[2]{\left\langle{#1},{#2}\right\rangle}

\renewcommand{\Re}{\mathrm{Re}}

\newcommand{\cD}{\mathcal{D}}

\newcommand{\wh}[1]{{\widehat{#1}}}
\newcommand{\E}[1]{\mathbb{E}\left[#1\right]}                           % expectation
        % conditional expectation (1st arg: unconditional random variable, 2nd arg: conditioning)
                     % expectation with superscript (1st argument: superscript, 2nd arg: random variable)
  % conditional expectation with superscript (1st argument: superscript, 2nd arg: random variable, 3rd argument: conditioning)

% ----------------------------------------------------------------

\title{The G\"{a}rtner-Ellis theorem, homogenization, and affine processes}
\author{Archil Gulisashvili}
\address{Department of Mathematics, Ohio University, USA}
\email{gulisash@ohio.edu}

\author{Josef Teichmann}
\address{Department of Mathematics, ETH Zurich, Switzerland}
\email{jteichma@math.ethz.ch}

\thanks{We gratefully acknowledge the support by the Institute for Mathematical Research (FIM) and the ETH foundation. We would also like to thank Antoine Jacquier for reading the paper and making very helpful comments.}

\keywords{affine process, large deviation principle, heat kernel expansion, short time asymptotics}

\subjclass[2010]{60F10, 35K08}

\date{}

\begin{document}
\maketitle
%\onehalfspace

\begin{abstract}
We obtain a first order extension of the large deviation estimates in the G\"{a}rtner-Ellis theorem. In addition, for a given family of measures, we find a special family of functions having a similar Laplace principle expansion up to order one to that of the original family of measures. The construction of the special family of functions mentioned above is based on heat kernel expansions. Some of the ideas employed in the paper come from the theory of affine stochastic processes. For instance, we provide an explicit expansion with respect to the homogenization parameter of the rescaled cumulant generating function in the case of a generic continuous affine process.
%the homogenization procedure for the rescaled cumulant generating function plays an important role in the paper.
We also compute the coefficients in the homogenization expansion for the Heston model that is one of the most popular stock price models with stochastic volatility.
\end{abstract}
\section{Introduction}\label{S:I}
The large deviations theory has found numerous applications in mathematical finance (see, e.g., \cite{ph}). For instance, using the methods of the large deviations theory, one can estimate various important characteristics of financial models such as tails of asset price distributions, option pricing functions, and the implied volatility 
(see, e.g., \cite{forjac:09,fj,fjm,fk,fjl,fk,gl,JM} and the references therein). A popular source of information on the large deviations theory is the book \cite{demzei:98} by Dembo and Zeitouni. A useful result in the theory is the G\"{a}rtner-Ellis theorem (see \cite{ellis,gartner}, see also \cite{demzei:98}). This theorem allows to infer the upper and lower estimates in the large deviation principle knowing the properties of the limiting cumulant generating function. 

We will next provide a brief overview of the contents of the paper. In Section 2, a new notion of Laplace principle equivalent expansions for families of functions and measures is introduced. This notion is motivated by the homogenization expansion of the rescaled cumulant generating function associated with an affine stochastic process $X$, that is, the function $\Lambda$ defined by
$$
\Lambda(\epsilon,u)=\epsilon \log \mathbb{E} \left[ \exp\left\{ - \frac{u}{\epsilon} X_\epsilon  \right\} \right] = \epsilon \log \int_{\mathbb{R}} \exp \left\{- \frac{u}{\epsilon} z \right\} p_\epsilon(dz).
$$
Actually, the homogenization expansion mentioned above is nothing else but the real analytic expansion of the function $\Lambda$ with respect to the parameter $\epsilon$ (see Section \ref{S:homben}).
In Section 3, we gather definitions and known facts from the theory of general affine processes, while in Section 4, the homogenization procedure is described in all details for continuous affine processes. The main general results obtained in the paper are contained in Section 2 (see Theorems \ref{T:misss} and \ref{T:gGE}). Theorem \ref{T:misss} states that for any family of measures on the real line, satisfying the conditions in the G\"{a}rtner-Ellis theorem, and such that the homogenization expansion exists, we can find a special family of functions that is Laplace principle equivalent to the original family of measures. The structure of the function family in Theorem \ref{T:misss} resembles the first two terms in the heat kernel expansions on Riemannian manifolds (notice that we face a degenerate situation here, so we could not apply heat kernel expansion directly). Theorem \ref{T:gGE} is a generalization of the G\"{a}rtner-Ellis theorem. It is shown in Theorem \ref{T:gGE} that under the same conditions as in Theorem \ref{T:misss}, the first order large deviation estimates are valid. Finally, in Section 5, we compute the coefficients in the homogenization expansion for the correlated Heston model
that is one of the most popular stochastic stock price models with stochastic volatility.

\section{Distributions with equivalent Laplace principle expansions}\label{S:Laplace}

Laplace's principle is an asymptotic expansion technique, which allows one to approximate integrals of the form
\begin{equation}
\int_a^b f(z) \exp\left\{ - \frac{\phi(z)}{\epsilon}\right\}dz
\label{E:LP}
\end{equation}
as $ \epsilon \to 0 $. We will next formulate a rather general version of Laplace's principle that will be used in the sequel. Suppose the following conditions hold:
\begin{itemize}
\item The functions $f$ and $\phi$ in (\ref{E:LP}) are continuous on the interval $(a,b)$, and the integral in (\ref{E:LP}) converges absolutely for all $0<\epsilon<\epsilon_0$. 
\item The function $\phi$ has a unique absolute minimum that occurs at $z=z_0$ with $a< z_0< b$.
\item The function $\phi$ is strictly convex in a neighborhood of $z_0$. 
\item The function $\phi$ is four times continuously differentiable in a neighborhood of $z_0$, and 
\begin{equation}
\phi(z)=\phi(z_0)+\sum_{n=2}^4\frac{\partial^n\phi(z_0)}{n!}(z-z_0)^n+O\left((z-z_0)^5\right)
\label{E:diffe}
\end{equation}
as $z\rightarrow z_0$.
\item The formula in (\ref{E:diffe}) can be differentiated. More exactly, the condition
\begin{equation}
\partial\phi(z)=\sum_{n=2}^4\frac{\partial^n\phi(z_0)}{(n-1)!}(z-z_0)^{n-1}+O\left((z-z_0)^4\right),\quad z\rightarrow z_0,
\label{E:Taylor}
\end{equation}
holds.
\item The function $f$ is twice continuously differentiable in a neighborhood of $z_0$, and
\begin{equation}
f(z)=\sum_{n=0}^2\frac{\partial^nf(z_0)}{n!}(z-z_0)^n+O\left((z-z_0)^3\right)
\label{E:Ty}
\end{equation}
as $z\rightarrow z_0$. 
\end{itemize}
Then, as $\epsilon\rightarrow 0$, 
\begin{align}
&\int_a^b f(z) \exp\left\{ - \frac{\phi(z)}{\epsilon}\right\}dz  \nonumber \\
& =\exp\left\{ - \frac{\phi(z_0)}{\epsilon}\right\}  \sqrt{\frac{2 \pi \epsilon }{\partial^2\phi(z_0)}} 
\bigg [ f(z_0)+ \epsilon \bigg( \frac{\partial^2f(z_0)}{2{\partial^2\phi(z_0)}}+\frac{5{(\partial^3\phi(z_0))}^2f(z_0)}
{24{(\partial^2\phi(z_0))}^3} \nonumber \\ & - \frac{\partial^4\phi(z_0)f(z_0)}{8{(\partial^2\phi(z_0))}^2}- \frac{\partial^3\phi(z_0)\partial f(z_0)}{2{(\partial^2\phi(z_0))}^2} \bigg) + O\left(\epsilon^2\right)\bigg]. 
\label{E:LPf}
\end{align}
Formula (\ref{E:LPf}) can be derived by following the proof of Theorem 8.1 in \cite{O}. 

Let us next assume that weaker differentiability restrictions than those listed above are imposed on the functions $f$ and $\phi$:
\begin{itemize}
\item The function $\phi$ is twice continuously differentiable in a neighborhood of $z_0$, and 
\begin{equation}
\phi(z)=\phi(z_0)+\frac{\partial^2\phi(z_0)}{2}(z-z_0)^2+O\left((z-z_0)^3\right)
\label{E:differ}
\end{equation}
as $z\rightarrow z_0$.
\item The formula in (\ref{E:diffe}) can be differentiated. More exactly, the condition
\begin{equation}
\partial\phi(z)=\partial^2\phi(z_0)(z-z_0)+O\left((z-z_0)^2\right)\quad\mbox{as}\quad z\rightarrow z_0
\label{E:di}
\end{equation}
holds.
\item The function $f$ is such that
\begin{equation}
f(z)=f(z_0)+O\left(z-z_0\right)\quad\mbox{as}\quad z\rightarrow z_0. 
\label{E:dii}
\end{equation}
\end{itemize}
Then, as $\epsilon\rightarrow 0$,
\begin{align}
\int_a^b f(z) \exp\left\{ - \frac{\phi(z)}{\epsilon}\right\}dz 
=\exp\left\{ - \frac{\phi(z_0)}{\epsilon}\right\}  \sqrt{\frac{2 \pi \epsilon }{\partial^2\phi(z_0)}} 
\bigg [ f(z_0) + O\left(\epsilon\right)\bigg]. 
\label{E:LPff}
\end{align}
\begin{rem}\label{R:Tay} \rm
Using the Taylor formula, we see that (\ref{E:diffe}), (\ref{E:Taylor}), and (\ref{E:Ty}) hold provided that
the function $f$ is three times continuously differentiable and the function $\phi$ is five times continuously differentiable near $z_0$.
Similarly, (\ref{E:differ}), (\ref{E:di}), and (\ref{E:dii}) hold if $f$ is continuously differentiable and $\phi$ is three times continuously differentiable near $z_0$.
\end{rem}

Let $\bf p\rm=\left\{p_\epsilon\right\}_{\epsilon> 0}$ be a family of probability measures on $ \mathbb{R}$. The following assumption is modeled on the behavior of the family of moment generating functions of the affine process and on the homogenization ideas (see Section 4 for more details):
\begin{equation}
\int_{\mathbb{R}} \exp\left\{ - \frac{u}{\epsilon} z \right\} p_\epsilon(dz)=
\exp\big( \frac{\Lambda^{(0)}(u)}{\epsilon} \big) \exp \big( \Lambda^{(1)}(u) \big) \bigg( 1 + \epsilon \Lambda^{(2)}(u) + O(\epsilon^2)\bigg)
\label{E:missing}
\end{equation}
as $\epsilon\rightarrow 0$, where $\Lambda^{(i)}$, $0\le k\le 2$, are continuous functions on the domain $I$. The big $O$ estimate in (\ref{E:missing}) 
is uniform on all closed intervals contained in $I$.

It is not hard to see that the functions $\Lambda^{(i)}$, $0\le i\le 2$, in (\ref{E:missing}) can be recovered from the following formulas:
\begin{equation}
\Lambda^{(0)}(u)=\lim_{\epsilon \to 0 } \epsilon \log \int_{\mathbb{R}} \exp \left\{- \frac{u}{\epsilon} z \right\} p_\epsilon(dz),
\label{E:ge1}
\end{equation}
\begin{equation}
\exp\left\{\Lambda^{(1)}(u)\right\}=\lim_{\epsilon \to 0 } \; \exp\left\{ \frac{\Lambda^{(0)}(u)}{\epsilon}\right\} \int_{\mathbb{R}}
 \exp \left\{- \frac{u}{\epsilon} z \right\} p_\epsilon(dz), 
\label{E:ge2}
\end{equation}
and
\begin{align}
&\exp\left\{\Lambda^{(1)}(u)\right\} \Lambda^{(2)}(u)  \nonumber \\
&=\lim_{\epsilon \to 0 } \; \frac{1}{\epsilon} \left[\exp\left\{\frac{\Lambda^{(0)}(u)}{\epsilon}\right\} 
\int_{\mathbb{R}} \exp \left\{ -\frac{u}{\epsilon} z \right\} p_\epsilon(dz) - 
\exp\left\{\Lambda^{(1)}(u)\right\}\right].
\label{E:ge3}
\end{align}

It will be assumed throughout the rest of the paper that the conditions in the G\"{a}rtner-Ellis theorem hold. More precisely, we suppose that the following are true:
\begin{itemize}
\item The function $\Lambda^{(0)}$ defined in (\ref{E:ge1}) exists as an extended real number for all $u\in\mathbb{R}$. We denote by $I$ the maximum open interval such that the number $\Lambda^{(0)}(u)$ is finite for all $u\in I$.
\item The point $u=0$ belongs to the interval $I$.
\item The function $\Lambda^{(0)}$ is continuously differentiable on $I$, the derivative $\partial_u\Lambda^{(0)}$ is a strictly increasing function on $I$, and the range of the function $\partial_u\Lambda^{(0)}$ is $\mathbb{R}$.
\end{itemize}

The previous restrictions concern only the function $\Lambda^{(0)}$. By the G\"{a}rtner-Ellis theorem, they imply the validity of the large deviation principle for the family $\bf p\rm$. More information on the G\"{a}rtner-Ellis theorem can be found in \cite{demzei:98}. The existence of the functions $\Lambda^{(1)}$ and $\Lambda^{(2)}$ (these functions are determined from (\ref{E:ge2}) and (\ref{E:ge3}), respectively), signals that certain refinements of large deviation results may be possible.
\begin{rem}\label{R:remis} \rm In the paper \cite{JR} of Jacquier and Roome, an assumption similar to that in (\ref{E:missing}) is imposed on the rescaled cumulant generating function (see (2.1) in \cite{JR}). Moreover, there are more similarities between the assumptions in the present section and those in Section 2 of \cite{JR}. Note that the main results obtained in \cite{JR} concern the asymptotic behavior of forward start options and forward smiles.
\end{rem}

The function $\Lambda^{(0)}$ is strictly convex on $I$. Let us define an appropriate Legendre-Fenchel transform of $ \Lambda^{(0)} $, more precisely, we put
$$
\left[\Lambda^{(0)}\right]^{*}(z)= -\inf_{u\in I} (uz + \Lambda^{(0)}(u)),\quad z\in\mathbb{R}.
$$
It is clear that there exists a unique minimizer $ z \mapsto u^*(z) $ in the problem described above, satisfying
the condition
\begin{equation}
\partial_u\Lambda^{(0)}(u^{*}(z))=-z.
\label{E:obr}
\end{equation}
It follows that
\begin{equation}
\left[\Lambda^{(0)}\right]^{*}(z)=-zu^{*}(z)-\Lambda^{(0)}(u^{*}(z)).
\label{E:noos1}
\end{equation}
Since $\Lambda^{(0)}(0)=0$, we have $\left[\Lambda^{(0)}\right]^{*}(z)\ge 0$. It is well-known that the function 
$\left[\Lambda^{(0)}\right]^{*}$ is strictly convex on $\mathbb{R}$. The previous statements, (\ref{E:obr}), and (\ref{E:noos1}) imply that $\left[\Lambda^{(0)}\right]^{*}(z)=0$ if $z=-\partial_u\Lambda^{(0)}(0)$, and $\left[\Lambda^{(0)}\right]^{*}(z)> 0$ 
if $z\neq-\partial_u\Lambda^{(0)}(0)$.

Next, set 
\begin{equation}
d(z)=\sqrt{2\left[\Lambda^{(0)}\right]^{*}(z)}.
\label{E:recover}
\end{equation}
It is clear that 
\begin{equation}
\frac{d^2(z)}{2}=\left[\Lambda^{(0)}\right]^{*}(z).
\label{E:noo}
\end{equation}
Therefore,
\begin{equation}
d(z)=\sqrt{-2\left[zu^{*}(z)+\Lambda^{(0)}(u^{*}(z))\right]}.
\label{E:reco}
\end{equation}
By the strict convexity of the function $\Lambda^{(0)}$,
$$
\inf_{z\in\mathbb{R}} \left[uz + \frac{d^2(z)}{2}\right] = -\Lambda^{(0)}(u),\quad u\in I.
$$

Let $\bf p\rm$ be a family of Borel probability measures satisfying condition (\ref{E:missing}). Our next goal is to find a special family of functions $\bf f\rm=\left\{f_\epsilon\right\}_{\epsilon> 0}$ on $ \mathbb{R}$, for which the asymptotic behavior of rescaled moment generating functions resembles the behavior described in formula (\ref{E:missing}). It would be tempting to try to find an appropriate family $\bf f\rm$ among the families of functions satisfying the following condition as $\epsilon \to 0$:
\begin{align}
\int_{\mathbb{R}}\exp\left\{ - \frac{u}{\epsilon} z \right\} f_\epsilon(z) dz&=
\exp\left\{\frac{\Lambda^{(0)}(u)}{\epsilon}\right\} \exp\left\{ \Lambda^{(1)}(u)\right\} \nonumber \\
&\quad\times \bigg( 1 + \epsilon \Lambda^{(2)}(u) + O\left(\epsilon^2\right)\bigg)
\label{E:funk}
\end{align}
uniformly on compact subintervals of $I$, where the functions $\Lambda^{(k)}$, $0\le k\le 2$, are the same as in (\ref{E:missing}). 
%However, it is more realistic to use the error term of the form $O\left(\epsilon^{\frac{3}{2}}\right)$ in formula %(\ref{E:funk}), since for a given family $\bf p\rm$, the construction of a corresponding family $\bf f \rm$ often %employes the Laplace principle expansion given in (\ref{E:LPf}). 
However, we can not always guarantee 
the existence of the integral on the left-hand side of formula (\ref{E:funk})
due to the lack of control of the tail-behavior of the function $f_{\epsilon}$. The remedy here is to localize the condition in (\ref{E:funk}).
\begin{defn}\label{D:Lpe}
Let $\bf p\rm$ be a family of Borel probability measures such that (\ref{E:missing}) holds. We say that a family 
$\bf f\rm$ of continuous functions on $\mathbb{R}$ is Laplace principle equivalent up to order $1$ to the family $\bf p\rm$ 
provided that the following conditions hold: \\
\\
(i)\,\,For every $n\ge 1$ there exists a proper open subinterval $J_n\subset I$ of the interval $I$ such that as $\epsilon\rightarrow 0$,
\begin{align*}
\int_{-n}^n\exp\left\{ - \frac{u}{\epsilon} z \right\} f_\epsilon(z) dz&=
\exp\left\{ \frac{\Lambda^{(0)}(u)}{\epsilon}\right\} \exp \left\{ \Lambda^{(1)}(u) \right\} \\
&\quad \bigg( 1 + \epsilon \Lambda^{(2)}(u) + O_{n,u}\left(\epsilon^2\right)\bigg)
\end{align*}
for all $u\in J_n$. \\
\\
(ii)\,\,The sequence of intervals $J_n$, $n\ge 1$, is increasing and $\bigcup_{n=1}^{\infty}J_n=I$.
\end{defn}

The next statement explains how to construct the family $\bf f\rm$. The ansatz, defining the structure of the function
$f_{\epsilon}$ in formula (\ref{E:fami}), is based on the classical theory of heat kernel expansions. 
\begin{thm}\label{T:misss}
Let $\bf p\rm$ be a family of Borel probability measures on $ \mathbb{R} $ satisfying (\ref{E:missing}), and suppose the conditions in the G\"{a}rtner-Ellis theorem hold. Suppose also that the function $\Lambda^{(0)}$ is five times continuously differentiable on $I$, the function $\Lambda^{(1)}$ is three times continuously differentiable on $I$,
and the function $\Lambda^{(2)}$ is continuously differentiable on $I$.
Define a family $\bf f\rm$ of functions as follows:
\begin{equation}
f_\epsilon(z)=\frac{1}{\sqrt{2\pi \epsilon}}\exp \left\{-\frac{d^2(z)}{2\epsilon}\right\} (C_0(z)+\epsilon C_1(z)),\quad\epsilon> 0,
\label{E:fami}
\end{equation}
where $d$ is given by (\ref{E:reco}),
$$
C_0(z)=\sqrt{\partial_u^2 \Lambda^{(0)}(u^{*}(z))}\exp\left\{\Lambda^{(1)}(u^{*}(z))\right\},
$$
and
\begin{align*}
&C_1(z)=C_0(z)\,\Lambda^{(2)}(u^{*}(z))- \frac{\partial^2C_0(z)\,\partial^2_u\Lambda^{(0)}(u^{*}(z))}{2}
-\frac{5C_0(z)\,\left[\partial^3_u\Lambda^{(0)}(u^{*}(z))\right]^2}
{24\left[\partial_u^2\Lambda^{(0)}(u^{*}(z))\right]^3} \\
&+\frac{C_0(z)\,\left(3\left[\partial^3_u\Lambda^{(0)}(u^{*}(z))\right]^2-\partial^2_u\Lambda^{(0)}(u^{*}(z))\,\partial^4_u\Lambda^{(0)}(u^{*}(z))\right)}{8\left[\partial^2_u\Lambda^{(0)}(u^{*}(z))\right]^3}+\frac{\partial C_0(z)\,\partial^3_u\Lambda^{(0)}(u^{*}(z))}{2\partial^2_u\Lambda^{(0)}(u^{*}(z))}.
\end{align*}
Then the family $\bf f\rm$ is Laplace principle equivalent up to order $1$ to the family $\bf p\rm$. 
\end{thm}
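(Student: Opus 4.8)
The plan is to verify conditions (i) and (ii) of Definition \ref{D:Lpe} directly: substitute the ansatz (\ref{E:fami}) and invoke the Laplace expansions (\ref{E:LPf}) and (\ref{E:LPff}). Writing the phase
$$
\psi_u(z)=uz+\frac{d^2(z)}{2}=uz+\left[\Lambda^{(0)}\right]^{*}(z),
$$
one has, for every $n\ge1$ and every $u$,
$$
\int_{-n}^{n}e^{-uz/\epsilon}f_\epsilon(z)\,dz=\frac{1}{\sqrt{2\pi\epsilon}}\int_{-n}^{n}C_0(z)e^{-\psi_u(z)/\epsilon}\,dz+\frac{\sqrt{\epsilon}}{\sqrt{2\pi}}\int_{-n}^{n}C_1(z)e^{-\psi_u(z)/\epsilon}\,dz ,
$$
so the statement reduces to the small-$\epsilon$ asymptotics of two Laplace integrals with the common phase $\psi_u$.

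First I would analyse $\psi_u$. Differentiating (\ref{E:noos1}) and using (\ref{E:obr}) gives $\partial_z\left[\Lambda^{(0)}\right]^{*}(z)=-u^{*}(z)$, hence $\psi_u'(z)=u-u^{*}(z)$, and differentiating (\ref{E:obr}) gives $\psi_u''(z)=1/\partial_u^2\Lambda^{(0)}(u^{*}(z))$. Since $\left[\Lambda^{(0)}\right]^{*}$ is strictly convex, $\psi_u$ is strictly convex and attains its unique minimum at $z_0=z_0(u):=-\partial_u\Lambda^{(0)}(u)$, where $u^{*}(z_0)=u$ and, by (\ref{E:noos1}), $\psi_u(z_0)=-\Lambda^{(0)}(u)$. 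I would then take $J_n:=\{u\in I:|\partial_u\Lambda^{(0)}(u)|<n\}$. As $\partial_u\Lambda^{(0)}$ is a continuous strictly increasing bijection of $I$ onto $\RR$, each $J_n$ is a proper open subinterval of $I$ with compact closure contained in $I$, the $J_n$ increase, and $\bigcup_{n}J_n=I$; this is condition (ii). For $u\in J_n$ the minimiser $z_0(u)$ lies in $(-n,n)$, so $\psi_u$ meets the hypotheses of the Laplace expansions on $[-n,n]$, with the endpoints contributing only exponentially small terms.

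Next I would verify that the differentiability hypotheses are exactly what (\ref{E:LPf})--(\ref{E:LPff}) need and then substitute. Since $\Lambda^{(0)}\in C^5$, the inverse $u^{*}=(-\partial_u\Lambda^{(0)})^{-1}$ is of class $C^4$, hence $\psi_u\in C^5$, and with $\Lambda^{(1)}\in C^3$ one gets $C_0\in C^3$; by Remark \ref{R:Tay} this is what (\ref{E:LPf}) requires, so it applies to $\int_{-n}^{n}C_0 e^{-\psi_u/\epsilon}$. Likewise $\Lambda^{(2)}\in C^1$ gives $C_1\in C^1$, so (\ref{E:LPff}) applies to $\int_{-n}^{n}C_1 e^{-\psi_u/\epsilon}$. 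Inserting $\phi=\psi_u$ and $z_0=z_0(u)$ into (\ref{E:LPf}), the prefactor $e^{-\psi_u(z_0)/\epsilon}\sqrt{2\pi\epsilon/\psi_u''(z_0)}$ equals $e^{\Lambda^{(0)}(u)/\epsilon}$ times $\sqrt{2\pi\epsilon}$ times a power of $\partial_u^2\Lambda^{(0)}(u)$; dividing by $\sqrt{2\pi\epsilon}$ and using the formula for $C_0$, the leading term is $e^{\Lambda^{(0)}(u)/\epsilon}e^{\Lambda^{(1)}(u)}$. For the $O(\epsilon)$ term one first writes $\psi_u'''(z_0),\psi_u^{(4)}(z_0)$ in closed form through $\partial_u^2\Lambda^{(0)}(u),\partial_u^3\Lambda^{(0)}(u),\partial_u^4\Lambda^{(0)}(u)$ by repeated implicit differentiation of (\ref{E:obr}) (for instance $\psi_u''(z_0)=1/\partial_u^2\Lambda^{(0)}(u)$, $\psi_u'''(z_0)=\partial_u^3\Lambda^{(0)}(u)/(\partial_u^2\Lambda^{(0)}(u))^3$), substitutes these together with $C_0(z_0),\partial C_0(z_0),\partial^2C_0(z_0)$ into the bracket of (\ref{E:LPf}), and adds the contribution of the second integral, which by (\ref{E:LPff}) equals $\epsilon\,e^{\Lambda^{(0)}(u)/\epsilon}\sqrt{\partial_u^2\Lambda^{(0)}(u)}\,C_1(z_0)+O\big(\epsilon^2 e^{\Lambda^{(0)}(u)/\epsilon}\big)$. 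The definitions of $C_0$ and $C_1$ are reverse-engineered precisely so that the resulting $O(\epsilon)$ coefficient is $e^{\Lambda^{(0)}(u)/\epsilon}e^{\Lambda^{(1)}(u)}\Lambda^{(2)}(u)$, which, with the leading term, is condition (i).

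The step I expect to be the main obstacle is purely computational: chasing the chain rule through the bracket of (\ref{E:LPf}) so that the $\partial$-derivatives of $d$ and of $C_0$ turn into $\partial_u$-derivatives of $\Lambda^{(0)}$ and $\Lambda^{(1)}$ at $u=u^{*}(z_0)$, and seeing that the somewhat lengthy expression defining $C_1$ is exactly the quantity that cancels the spurious $O(\epsilon)$ term. A secondary point, routine but requiring care, is \emph{uniformity}: the error $O_{n,u}(\epsilon^2)$ must be uniform for $u\in J_n$, which is legitimate because on the compact closure of $J_n$ the functions $u^{*}$, $\partial_u^{k}\Lambda^{(0)}$ $(k\le5)$, $\partial_u^{k}\Lambda^{(1)}$ $(k\le3)$, $\Lambda^{(2)}$ and the relevant derivatives of $C_0$ are bounded while $\psi_u''(z_0)=1/\partial_u^2\Lambda^{(0)}(u)$ stays bounded away from $0$ and $\infty$, so the constants in the Laplace remainder can be chosen independently of $u$.
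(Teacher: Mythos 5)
Your proposal is correct and follows essentially the same route as the paper: write $uz+\tfrac{d^2(z)}{2}$ as the phase, identify the critical point $z_0(u)=-\partial_u\Lambda^{(0)}(u)$, choose $J_n=\{u:z_0(u)\in(-n,n)\}$, apply \eqref{E:LPf} to the $C_0$-integral and \eqref{E:LPff} to the $C_1$-integral, and translate $z$-derivatives of the phase and amplitude into $u$-derivatives of $\Lambda^{(0)},\Lambda^{(1)}$ via \eqref{E:obr} to recognize $C_0,C_1$ as the exact matching coefficients. The only added value is your explicit note on why the $O_{n,u}(\epsilon^2)$ error is uniform on $J_n$ (compactness of $\overline{J_n}\subset I$ and non-degeneracy of $\psi_u''(z_0)$), which the paper leaves implicit.
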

\begin{proof}
The differentiability restrictions on the functions $\Lambda^{(i)}$, $0\le i\le 2$, in the formulation of Theorem \ref{T:misss} are imposed because otherwise the functions 
$C_0$ and $C_1$ are not defined. Note that the function $z\mapsto u^{*}(z)$ is three times continuously differentiable on the real line. The previous statement easily follows from (\ref{E:obr}).

The proof of Theorem \ref{T:misss} is based on the following construction, which uses Laplace's principle. 
For every $n\ge 1$, we have
\begin{align}
&\int_{-n}^n \exp \left\{-\frac{u}{\epsilon} z \right\}f_{\epsilon}(z)dz \nonumber \\
&=\frac{1}{\sqrt{2\pi \epsilon}}\int_{-n}^n\exp\left\{-\frac{1}{\epsilon}
\left(uz+\frac{d^2(z)}{2}\right)\right\} (C_0(z)+\epsilon C_1(z))dz.
\label{E:temp}
\end{align}
Set
\begin{equation}
\phi_u(z)=uz+\frac{d^2(z)}{2}.
\label{E:temps}
\end{equation}
Laplace's principle will be applied to the family of integrals appearing on the right-hand side of (\ref{E:temp}) twice. The first time, formula (\ref{E:LPf}) with $f=C_0$ and $\phi=\phi_u$ will be used, while for the second time, formula (\ref{E:LPff}) will be used with $f=C_1$ and $\phi=\phi_u$. 

The critical point $z^{*}(u)$ of the function $\phi_u$ given by (\ref{E:temps}) is the solution to the equation $\partial_z\left[\Lambda^{(0)}\right]^{*}(z)=u$. It is not hard to see that $z=z^{*}(u)$ if and only if $u=u^{*}(z)$.
It follows from (\ref{E:obr}) that
\begin{equation}
z^{*}(u)=-\partial\Lambda^{(0)}(u),\quad u\in I.
\label{E:orb}
\end{equation}

The next formulas can be derived using (\ref{E:noo}), (\ref{E:noos1}), (\ref{E:temps}), and (\ref{E:orb}). We have
\begin{equation}
\partial^2_z\phi_u(z^{*}(u))=\frac{1}{\partial^2\Lambda^{(0)}(u)},
\label{E:u11}
\end{equation}
\begin{equation}
\partial^3_z\phi_u(z^{*}(u))=\frac{\partial^3\Lambda^{(0)}(u)}{[\partial^2\Lambda^{(0)}(u)]^3},
\label{E:u12}
\end{equation}
and
\begin{equation}
\partial^4_z\phi_u(z^{*}(u))=\frac{3[\partial^3\Lambda^{(0)}(u)]^2
-\partial^2\Lambda^{(0)}(u)\,\partial^4\Lambda^{(0)}(u)}{[\partial^2\Lambda^{(0)}(u)]^5}.
\label{E:u13}
\end{equation}

Let us define the intervals $J_n$ appearing in Definition \ref{D:Lpe} as follows:
$$
J_n=\{u\in I:z^{*}(u)\in(-n,n)\},\quad n\ge 1.
$$
It is not hard to see that condition $(ii)$ in Definition \ref{D:Lpe} is satisfied. 
Next, using (\ref{E:LPf}) and (\ref{E:temp}), we obtain
\begin{align}
&\int_{-n}^n \exp \left\{-\frac{u}{\epsilon} z \right\}f_{\epsilon}(z)dz=\exp\left\{ - \frac{\phi_u(z^{*}(u))}{\epsilon}\right\}  \sqrt{\frac{1}{\partial^2_z\phi(z^{*}(u))}} \nonumber \\
&\bigg [ C_0(z^{*}(u))+ \epsilon \bigg(C_1(z^{*}(u))+ \frac{\partial^2_zC_0(z^{*}(u))}{2{\partial^2_z\phi_u(z^{*}(u))}}+\frac{5{(\partial^3_z\phi_u(z^{*}(u)))}^2C_0(z^{*}(u))}
{24{(\partial^2_z\phi_u(z^{*}(u)))}^3} - \nonumber \\ & - \frac{\partial^4_z\phi_u(z^{*}(u))C_0(z^{*}(u))}{8{(\partial^2_z\phi_u(z^{*}(u)))}^2}- \frac{\partial^3_z\phi_u(z^{*}(u))\partial_zC_0(z^{*}(u))}{2{(\partial^2_z\phi_u(z^{*}(u)))}^2} \bigg) + O_{n,u}\left(\epsilon^2\right)\bigg]
\label{E:two}
\end{align}
as $\epsilon\rightarrow 0$. Note that the differentiability conditions in Theorem \ref{T:misss} allow us to use formulas
(\ref{E:LPf}) and (\ref{E:LPff}) with the functions $f$ and $\phi$ chosen above.

We will next compare the formulas in (\ref{E:missing}) and (\ref{E:two}). Note that
$$
\phi_u(z^{*}(u))=u\,z^{*}(u)-z^{*}(u)\,u^{*}(z^{*}(u))-\Lambda^{(0)}(u^{*}(z^{*}(u)))=
-\Lambda^{(0)}(u).
$$
This shows that if we choose the function $d$ as in (\ref{E:recover}), then the first factors in formulas
(\ref{E:missing}) and (\ref{E:two}) coincide. Moreover, the functions $C_0$ and $C_1$ have to be chosen so that
\begin{equation}
C_0(z^*(u))=\sqrt{\partial_u^2 \Lambda^{(0)}(u)}\exp(\Lambda^{(1)}(u)) 
\label{E:seq1}
\end{equation}
and
\begin{align}
C_1(z^{*}(u))&=C_0(z^*(u))\,\Lambda^{(2)}(u)- \frac{\partial^2_zC_0(z^{*}(u))}{2{\partial^2_z\phi_u(z^{*}(u))}}-\frac{5{(\partial^3_z\phi_u
(z^{*}(u)))}^2\,C_0(z^{*}(u))}
{24{(\partial^2_z\phi_u(z^{*}(u)))}^3} \nonumber \\ 
& + \frac{\partial^4_z\phi_u(z^{*}(u))\,C_0(z^{*}(u))}{8{(\partial^2_z\phi_u(z^{*}(u)))}^2}+ \frac{\partial^3_z\phi_u(z^{*}(u))\,\partial_zC_0(z^{*}(u))}{2{(\partial^2_z\phi_u(z^{*}(u)))}^2}.
\label{E:seq2}
\end{align}

The representations of the functions $C_0$ and $C_1$ given in Theorem \ref{T:misss} can be obtained by plugging
$u=u^{*}(z)$ into (\ref{E:seq1}) and (\ref{E:seq2}), and simplifying the resulting formulas. Equalities (\ref{E:orb})-(\ref{E:u13}) are taken into account in the simplifications.

This completes the proof of Theorem \ref{T:misss}.
\end{proof}
\begin{rem}\label{R:nevsegda}\rm
We have already established that $\left[\Lambda^{(0)}\right]^{*}(y)\ge 0$ for all $y\in\mathbb{R}$. Since
(\ref{E:obr}) and (\ref{E:noos1}) hold, we have
$$
\partial\left[\Lambda^{(0)}\right]^{*}(y)=-u^{*}(y)
$$
for all $y\in\mathbb{R}$. Hence the infimum of the function $\left[\Lambda^{(0)}\right]^{*}$ on the real line is attained at the point $y$ such that $u^{*}(y)=0$. This point is given by $y=z^{*}(0)=\partial\Lambda^{(0)}(0)$. Moreover,
$$
\inf_{y\in\mathbb{R}}\left[\Lambda^{(0)}\right]^{*}(y)=-\Lambda^{(0)}(0)=0.
$$
\end{rem}
\begin{rem}\label{R:vsegda}\rm
A heuristic conclusion that can be reached using Theorem \ref{T:misss} is that the family $\bf f\rm$ is a small-time approximation to the 
%rescaled Laplace transforms of the measures $p_{\epsilon}$. 
family $\bf p\rm$ in a certain very weak sense.
%This gives hope to find reasonable higher order small-time approximations to the original family $\bf p\rm$. 
Finding such approximations is an important problem. We consider our results as first modest steps in going beyond the celebrated G\"{a}rtner-Ellis theorem. 
\end{rem}

The next assertion provides a first order large deviation estimate in the G\"{a}rtner-Ellis theorem for families of measures satisfying condition (\ref{E:missing}). Higher order estimates can also be found, but we do not include them in the present paper. Let $A$ be a bounded Borel set. Denote by $\overline{A}$ the closure of the set $A$, and let 
$$
a^{+}=\sup_{z\in A}\{z\}\quad\mbox{and}\quad a^{-}=\inf_{z\in A}\{z\}. 
$$
Then we have $z^{+},\,z^{-}\in\overline{A}$. 
\begin{thm}\label{T:gGE}
Let $\bf p\rm$ be a family of probability Borel measures on $ \mathbb{R} $ such that (\ref{E:missing}) holds. Suppose also that the function $\Lambda^{(0)}$ is twice continuously differentiable on $I$ and the conditions in the G\"{a}rtner-Ellis theorem hold (see the conditions listed after formula (\ref{E:ge3})).  
Suppose also that $ A\subset\mathbb{R} $ is a bounded Borel set, and $ x \in A $. Then the following are true:
\\
\\
(i)\,\,If $x\ge\partial\Lambda^{(0)}(0)$, then as $\epsilon\rightarrow 0$,
\begin{align}
p_\epsilon(A)&\le \exp \left\{- \frac{\left[\Lambda^{(0)}\right]^{*}(x)-u^*(x)\,(a^{+}-x)}{\epsilon}\right\}\exp \left\{\Lambda^{(1)}(u^*(x))\right\} \nonumber \\
&\quad\times\bigg( 1 + \epsilon \Lambda^{(2)}(u^*(x)) + \mathcal{O}(\epsilon^2) \bigg).
\label{E:ram1}
\end{align}
\vspace{0.1in}
(ii)\,\,If $x<\partial\Lambda^{(0)}(0)$, then as $\epsilon\rightarrow 0$,
\begin{align}
p_\epsilon(A)&\le \exp \left\{- \frac{\left[\Lambda^{(0)}\right]^{*}(x)-|u^*(x)|\,(x-a^{-})}{\epsilon}\right\}\exp \left\{\Lambda^{(1)}(u^*(x))\right\} \nonumber \\
&\quad\times\bigg( 1 + \epsilon \Lambda^{(2)}(u^*(x)) + \mathcal{O}(\epsilon^2) \bigg).
\label{E:ram3}
\end{align}
The big $O$ estimates in (\ref{E:ram1}) and (\ref{E:ram3}) are uniform with respect to $x\in A$.
\end{thm}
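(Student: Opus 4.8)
The plan is to run the classical exponential Chebyshev (Chernoff) inequality, but to carry the refined error term of (\ref{E:missing}) through the estimate, and then to optimize by an exponential tilt with parameter $u^{*}(x)$. The starting point is the elementary pointwise bound: for $u\ge 0$ one has $\mathbf{1}_{A}(z)\le\exp\{-\tfrac u\epsilon(z-a^{+})\}$ for every $z\in\mathbb{R}$, because the right-hand side is $\ge 1$ when $z\le a^{+}$, while for $z>a^{+}$ one has $z\notin A$ and the left-hand side vanishes. Integrating against $p_\epsilon$ and substituting (\ref{E:missing}) gives, for every $u\in I$ with $u\ge 0$,
$$
p_\epsilon(A)\le e^{ua^{+}/\epsilon}\int_{\mathbb{R}}e^{-uz/\epsilon}\,p_\epsilon(dz)=\exp\Big\{\tfrac{\Lambda^{(0)}(u)+u\,a^{+}}{\epsilon}\Big\}\,e^{\Lambda^{(1)}(u)}\bigl(1+\epsilon\Lambda^{(2)}(u)+\mathcal{O}(\epsilon^{2})\bigr),
$$
and symmetrically, for $u\in I$ with $u\le 0$, the same inequality with $a^{-}$ in place of $a^{+}$ (here one uses $\mathbf{1}_{A}(z)\le\exp\{-\tfrac u\epsilon(z-a^{-})\}$).

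Next I would take $u=u^{*}(x)$. This is admissible because, by (\ref{E:obr}) and the G\"{a}rtner--Ellis hypotheses, $\partial_u\Lambda^{(0)}\colon I\to\mathbb{R}$ is a strictly increasing bijection, so $u^{*}(x)=(\partial_u\Lambda^{(0)})^{-1}(-x)$ is well defined, continuous in $x$, and takes values in $I$; and since $A$ is bounded, $u^{*}(A)$ is contained in a compact subinterval of $I$, on which the $\mathcal{O}(\epsilon^{2})$ in (\ref{E:missing}) is uniform and $\Lambda^{(1)},\Lambda^{(2)}$ are bounded --- this is precisely what yields the uniformity in $x\in A$ claimed in the theorem. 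The case hypothesis fixes the sign of $u^{*}(x)$ through (\ref{E:obr}) and the monotonicity of $\partial_u\Lambda^{(0)}$ (together with Remark \ref{R:nevsegda}, which locates the minimiser of the rate function), so that in case (i) the parameter has the sign needed for the $a^{+}$-version of the above inequality and in case (ii) the sign needed for the $a^{-}$-version.

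It then only remains to rewrite the exponent. By the Fenchel identity (\ref{E:noos1}), $\Lambda^{(0)}(u^{*}(x))=-x\,u^{*}(x)-\left[\Lambda^{(0)}\right]^{*}(x)$, so in case (i)
$$
\Lambda^{(0)}(u^{*}(x))+u^{*}(x)\,a^{+}=-\left[\Lambda^{(0)}\right]^{*}(x)+u^{*}(x)\,(a^{+}-x),
$$
which is exactly the exponent appearing in (\ref{E:ram1}); the identical computation with $a^{-}$ and with $|u^{*}(x)|=-u^{*}(x)$ produces the exponent in (\ref{E:ram3}). Carrying along the factor $e^{\Lambda^{(1)}(u^{*}(x))}\bigl(1+\epsilon\Lambda^{(2)}(u^{*}(x))+\mathcal{O}(\epsilon^{2})\bigr)$ finishes the proof. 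The computation is short; the one point that needs care is the second step --- keeping the tilting parameter $u^{*}(x)$ inside a fixed compact subinterval of $I$ as $x$ ranges over $A$, so that (\ref{E:missing}) may be applied with a remainder that is uniform in $x$ --- together with the bookkeeping of signs that decides between the $a^{+}$- and $a^{-}$-versions.
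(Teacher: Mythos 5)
Your proposal is correct and follows essentially the same route as the paper's proof: a Chernoff-type exponential bound, choice of tilting parameter $u = u^{*}(x)$ so that condition (\ref{E:missing}) can be invoked, and the Fenchel identity (\ref{E:noos1}) to rewrite the exponent; the only cosmetic difference is that you pass directly from the pointwise bound $\mathbf{1}_{A}(z)\le\exp\{-u(z-a^{\pm})/\epsilon\}$ to an integral over $\mathbb{R}$, whereas the paper first writes the infimum of $e^{-uz/\epsilon}$ over $\overline{A}$ (yielding $\xi(u)=a^{+}$ or $a^{-}$) and then enlarges the domain of integration from $A$ to $\mathbb{R}$. Your explicit observation that $u^{*}(A)$ stays inside a compact subinterval of $I$, so that the remainder in (\ref{E:missing}) is uniform, is a helpful justification of the uniformity claim that the paper leaves implicit.
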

\begin{rem}\label{R:rii1} \rm
The conditions $x\ge\partial\Lambda^{(0)}(0)$ and $x<\partial\Lambda^{(0)}(0)$ are equivalent to $u^{*}(x)\ge 0$ and
$u^{*}(x)< 0$, respectively.
\end{rem}
\begin{thm}\label{T:gGEl}
Let $\bf p\rm$ be a family of probability Borel measures on $ \mathbb{R} $ such that (\ref{E:missing}) holds. Suppose also that the function $\Lambda^{(0)}$ is twice continuously differentiable on $I$ and the conditions in the G\"{a}rtner-Ellis theorem hold (see the conditions listed after formula (\ref{E:ge3})).  
Suppose also that $ A\subset\mathbb{R} $ is a bounded open set, and $ x \in A $. Then the following are true: \\
\\
(i)\,\,Let $x\ge\partial\Lambda^{(0)}(0)$. Then there exists a constant $\gamma_A> 0$ depending on the set $A$ such that as $\epsilon\rightarrow 0$,
\begin{align}
p_{\epsilon}(A)&\ge \exp \left\{- \frac{\left[\Lambda^{(0)}\right]^{*}(x)+u^*(x)\,(x-a^{-})}{\epsilon}\right\}
\exp \left\{\Lambda^{(1)}(u^*(x))\right\} \nonumber \\
&\quad\times\left(1-\exp\left\{-\frac{\gamma_A}{\epsilon}\right\}\right)\bigg( 1 + \epsilon \Lambda^{(2)}(u^*(x)) + \mathcal{O}(\epsilon^2) \bigg).
\label{E:ram2}
\end{align}
\vspace{0.1in}
(ii)\,\,If $x<\partial\Lambda^{(0)}(0)$, then as $\epsilon\rightarrow 0$,
\begin{align}
p_{\epsilon}(A)&\ge \exp \left\{- \frac{\left[\Lambda^{(0)}\right]^{*}(x)+|u^*(x)|\,(a^{+}-x)}{\epsilon}\right\}
\exp \left\{\Lambda^{(1)}(u^*(x))\right\} \nonumber \\
&\quad\times\left(1-\exp\left\{-\frac{\gamma_A}{\epsilon}\right\}\right)\bigg( 1 + \epsilon \Lambda^{(2)}(u^*(x)) + \mathcal{O}(\epsilon^2) \bigg). 
\label{E:ram4}
\end{align}
The constant $\gamma_A$ in (\ref{E:ram4}) is the same as in (\ref{E:ram2}), and the big $O$ estimates in (\ref{E:ram2}) and (\ref{E:ram4}) are uniform with respect to $x\in A$.
\end{thm}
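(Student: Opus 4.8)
The plan is to run the classical exponential-tilting (change-of-measure) argument underlying the lower bound in the G\"artner--Ellis theorem, but to retain the sub-exponential corrections produced by $\Lambda^{(1)}$ and $\Lambda^{(2)}$ in (\ref{E:missing}). I will carry out case (i), where $u^{*}(x)\ge 0$ by Remark \ref{R:rii1}; case (ii) is completely symmetric, the role of the left endpoint being played by the right endpoint (and $a^{-}$ by $a^{+}$), and the one-sided monotonicity of $z\mapsto e^{u^{*}(x)z/\epsilon}$ being reversed.

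For $u\in I$ put $M_\epsilon(u)=\int_{\mathbb{R}}e^{-uz/\epsilon}p_\epsilon(dz)\in(0,\infty)$ and let $q_\epsilon^{u}(dz)=M_\epsilon(u)^{-1}e^{-uz/\epsilon}p_\epsilon(dz)$ be the associated tilted probability measure. Fix $x\in A$, set $u_0=u^{*}(x)$, and, using that $A$ is open, choose $c,d$ with $a^{-}\le c<x<d\le a^{+}$ and $(c,d)\subseteq A$. Since $u_0\ge 0$ we have $e^{u_0z/\epsilon}\ge e^{u_0c/\epsilon}$ for $z>c$, hence
\[ p_\epsilon(A)\ \ge\ p_\epsilon\big((c,d)\big)\ =\ M_\epsilon(u_0)\int_{(c,d)}e^{u_0z/\epsilon}\,q_\epsilon^{u_0}(dz)\ \ge\ M_\epsilon(u_0)\,e^{u_0c/\epsilon}\,q_\epsilon^{u_0}\big((c,d)\big). \]
I would then substitute the expansion (\ref{E:missing}) for $M_\epsilon(u_0)$ and use (\ref{E:noos1}), in the form $\Lambda^{(0)}(u_0)+u_0x=-[\Lambda^{(0)}]^{*}(x)$, to rewrite $M_\epsilon(u_0)e^{u_0c/\epsilon}$ as $\exp\{-\epsilon^{-1}([\Lambda^{(0)}]^{*}(x)+u_0(x-c))\}\exp\{\Lambda^{(1)}(u_0)\}(1+\epsilon\Lambda^{(2)}(u_0)+O(\epsilon^2))$. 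Because $c\ge a^{-}$ and $u_0\ge 0$, replacing $x-c$ by $x-a^{-}$ only weakens the bound and produces precisely the exponential rate appearing in (\ref{E:ram2}); multiplying through by the (eventually positive) factors then yields (\ref{E:ram2}) once we know the concentration estimate $q_\epsilon^{u_0}\big((c,d)\big)\ge 1-e^{-\gamma_A/\epsilon}$.

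For the concentration estimate I would run a Chernoff bound on each tail. For $v\in I$ with $v<u_0$ and $z\ge d$ one has $-u_0z\le -vz-(u_0-v)d$, so $\int_{[d,\infty)}e^{-u_0z/\epsilon}p_\epsilon(dz)\le e^{-(u_0-v)d/\epsilon}M_\epsilon(v)$ and therefore $q_\epsilon^{u_0}\big([d,\infty)\big)\le \big(M_\epsilon(v)/M_\epsilon(u_0)\big)e^{-(u_0-v)d/\epsilon}$; by (\ref{E:missing}) the right-hand side equals, up to a bounded factor, $\exp\{\epsilon^{-1}(\Lambda^{(0)}(v)-\Lambda^{(0)}(u_0)+(v-u_0)d)\}$. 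As a function of $v$ this exponent vanishes at $v=u_0$ and, by (\ref{E:obr}), has derivative $\partial_u\Lambda^{(0)}(u_0)+d=d-x>0$ there; hence it is strictly negative for suitable $v\in I$ just below $u_0$, giving $q_\epsilon^{u_0}\big([d,\infty)\big)\le e^{-\gamma_1/\epsilon}$ for small $\epsilon$. Symmetrically, choosing $v\in I$ just above $u_0$ and using $\partial_u\Lambda^{(0)}(u_0)+c=c-x<0$ yields $q_\epsilon^{u_0}\big((-\infty,c]\big)\le e^{-\gamma_2/\epsilon}$. Adding, $q_\epsilon^{u_0}\big((c,d)\big)\ge 1-e^{-\gamma_1/\epsilon}-e^{-\gamma_2/\epsilon}\ge 1-e^{-\gamma_A/\epsilon}$ for any $\gamma_A<\min(\gamma_1,\gamma_2)$ and $\epsilon$ small.

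The hard part is the concentration estimate together with its uniformity in $x$. The tail rates $\gamma_1,\gamma_2$ and the admissible tilting perturbations $v$ depend on $x$ through $u_0=u^{*}(x)$ and on the localizing interval $(c,d)$; since $\overline{A}$ is compact and $\partial_u\Lambda^{(0)}$ maps $I$ homeomorphically onto $\mathbb{R}$, the set $u^{*}(\overline{A})$ is a compact subinterval of $I$ on which the functions $\Lambda^{(k)}$ and $\partial_u\Lambda^{(0)}$ are controlled uniformly, which is what makes the $O(\epsilon^2)$ error and the multiplicative constants uniform in $x$. The genuinely delicate point is that $\gamma_1,\gamma_2$ degenerate when $x$ approaches $\partial A$ and $(c,d)$ is forced to shrink, so extracting a single constant $\gamma_A$ valid for all of $A$ requires organizing these choices carefully (for instance via a covering of $A$ by intervals on which the construction is uniform), and one must keep track throughout of the differentiability hypothesis on $\Lambda^{(0)}$ and of the uniform-on-compacts nature of the $O(\epsilon^2)$ term in (\ref{E:missing}).
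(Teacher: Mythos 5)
Your proposal is in essence the paper's proof: both first pass to the tilted (change-of-measure) probability $\widetilde{p}_\epsilon$ (your $q_\epsilon^{u_0}$) centred at $u_0=u^*(x)$, and then reduce (\ref{E:ram2}) to the statement that this tilted measure concentrates in $A$ up to an $e^{-\gamma_A/\epsilon}$ error. The one substantive difference is in how that concentration is obtained. You localize to a subinterval $(c,d)\subset A$ and run a Chernoff bound on each tail, choosing tilting parameters $v$ near $u_0$ and using (\ref{E:obr}) to make the exponent $\Lambda^{(0)}(v)-\Lambda^{(0)}(u_0)+(v-u_0)d$ (and its left-tail analogue) negative. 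The paper instead observes that the tilted family has limiting rescaled cumulant generating function $\widetilde{\Lambda}^{(0)}(v)=\Lambda^{(0)}(v+u^*(x))-\Lambda^{(0)}(u^*(x))$, that the unique zero of $[\widetilde{\Lambda}^{(0)}]^{*}$ is at $y=x\in A$, and then applies the G\"{a}rtner-Ellis upper estimate directly to $\widetilde{p}_\epsilon(A^c)$, with $\delta_A=\inf_{y\in A^c}[\widetilde{\Lambda}^{(0)}]^{*}(y)>0$ and $\gamma_A$ any number in $(0,\delta_A)$. These are two presentations of the same Chernoff-type argument; the paper's is marginally cleaner in that it works with all of $A$ and borrows the tail estimate as a black box, while yours is more self-contained. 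Finally, the uniformity-in-$x$ concern you raise is legitimate, and it is in fact equally present in the paper's own proof: $\delta_A$ there depends on $x$ through $u^*(x)$ and degenerates as $x$ approaches $\partial A$, so neither argument actually produces a single $\gamma_A$ serving all $x\in A$ simultaneously; one obtains a uniform constant only when $x$ ranges over a compact subset of $A$, or else $\gamma_A$ must be read as $x$-dependent. Your instinct to flag this, rather than sweep it into the statement, is sound.
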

\begin{rem}\label{R:ischez} \rm
Note that performing the transformation
$\limsup_{\epsilon\rightarrow 0}\epsilon\log p_{\epsilon}(A)$ in the upper estimates in Theorem \ref{T:gGE}, we obtain the upper estimate in the large deviation principle for any bounded Borel set $A$. This gives a little more than the upper estimate in the G\"{a}rtner-Ellis theorem. However, we should not forget that formula (\ref{E:ram1}) was derived under a stronger restriction (\ref{E:missing}), than in the G\"{a}rtner-Ellis theorem. 
%Note also that the expression $|u^*(x)|\,\mbox{\rm diam}\,(A)$ plays a significant role in the first order estimates %given in (\ref{E:ram1}) and (\ref{E:ram2}). An interesting special case in estimate (\ref{E:ram1}) is where the %diameter of the set $A$ is so small that 
%$$
%\left[\Lambda^{(0)}\right]^{*}(x)>|u^*(x)|\,\mbox{\rm diam}\,(A).
%$$
%The smallness of $\mbox{\rm diam}\,(A)$ is also important in estimate (\ref{E:ram2}).
\end{rem}

\it Proof of Theorem \ref{T:gGE}. \rm
We borrow some ideas from the proofs of Cramer's theorem and the G\"artner-Ellis theorem given in \cite{demzei:98}. 
The proofs of the upper estimates in those theorems use Chebyshev's inequality. In our case, due to a special structure of the problem, we can provide a slightly more direct proof.

Suppose the conditions in Theorem \ref{T:gGE} hold, and let $u\in I$ and $\epsilon> 0$.
Then we have
\begin{align}
&\int_A \exp\left\{ -\frac{uz}{\epsilon}\right\} p_\epsilon(dz)\ge p_\epsilon (A)
\inf_{z\in\overline{A}}\left[\exp \left\{- \frac{uz}{\epsilon}\right\}\right].
 \label{E:doublee1}
 \end{align} 
It follows from (\ref{E:doublee1}) that for every $u\in I$ there exists $\xi(u)\in\overline{A}$ such that 
\begin{align*}
p_\epsilon (A) \le &  \exp \left\{ \frac{u\xi(u)}{\epsilon}\right\}\int_A \exp\left\{
 -\frac{uz}{\epsilon}\right\} p_\epsilon(dz)  \\
= &  \exp\left\{ -\frac{\Lambda^{(0)}(u)}{\epsilon}\right\} \int_A \exp \left\{ -\frac{u}{\epsilon} z \right\} p_\epsilon(dz) \\
& \times \exp\left\{ \frac{\Lambda^{(0)}(u)+xu + u(\xi(u) - x)}{\epsilon}\right\}.
\end{align*}
Indeed, we can take $\xi(u)=a^{+}$ if $u\ge 0$ and $\xi(u)=a^{-}$ if $u< 0$.

Next, by plugging $u=u^*(x)$ into the previous equalities and taking into account condition (\ref{E:missing}), we get
\begin{align}
p_\epsilon(A) \le & \exp\left\{-\frac{\Lambda^{(0)}(u^*(x))}{\epsilon}\right\} \int_A \exp\left\{-\frac{u^*(x)}{\epsilon} z\right\} p_\epsilon(dz) \nonumber \\
& \times \exp\left\{- \frac{\left[\Lambda^{(0)}\right]^{*}(x)- u^*(x)(\xi(u^*(x)) - x)}{\epsilon}\right\} \nonumber \\
\le& \exp \left\{ \Lambda^{(1)}(u^*(x)) \right\} \exp\left\{ - \frac{\left[\Lambda^{(0)}\right]^{*}(x) - u^*(x)(\xi(u^*(x)) - x)}{\epsilon}\right\} \nonumber \\
&\quad\times\bigg( 1 + \epsilon \Lambda^{(2)}(u^*(x)) + \mathcal{O}(\epsilon^2) \bigg)
\label{key-equation}
\end{align}
as $\epsilon\rightarrow 0$. Now, it is not hard to see that (\ref{key-equation}) implies Theorem \ref{T:gGE}.
\\
\\
\it Proof of Theorem \ref{T:gGEl}. \rm
The lower bounds given in Theorem \ref{T:gGEl} are more delicate. Here we start with the estimate
$$
\int_A \exp\left\{ -\frac{uz}{\epsilon}\right\} p_\epsilon(dz)\le p_\epsilon (A)
\sup_{z\in\overline{A}}\left[\exp \left\{- \frac{uz}{\epsilon}\right\}\right]
$$
instead of the estimate in (\ref{E:doublee1}). This implies that 
\begin{align*}
p_\epsilon (A) &\ge   \exp \left\{ \frac{u\eta(u)}{\epsilon}\right\}\int_A \exp\left\{
 -\frac{uz}{\epsilon}\right\} p_\epsilon(dz)  \\
= &  \exp\left\{ -\frac{\Lambda^{(0)}(u)}{\epsilon}\right\} \int_A \exp \left\{ -\frac{uz}{\epsilon}\right\} p_\epsilon(dz) \\
& \times \exp\left\{ \frac{\Lambda^{(0)}(u)+xu + u(\eta(u) - x)}{\epsilon}\right\},
\end{align*}
for all $u\in I$, where $\eta(u)=a^{-}$ if $u\ge 0$ and $\eta(u)=a^{+}$ if $u< 0$. Therefore
\begin{align}
p_\epsilon (A)\ge & \exp\left\{-\frac{\Lambda^{(0)}(u^*(x))}{\epsilon}\right\} \int_A \exp\left\{-\frac{u^*(x)}{\epsilon} z\right\} p_\epsilon(dz) \nonumber \\
& \times \exp\left\{- \frac{\left[\Lambda^{(0)}\right]^{*}(x)- u^*(x)(\eta(u^*(x)) - x)}{\epsilon}\right\}.
\label{E:delicat} 
\end{align}

Our next goal is to use the change of measure method. Consider a new family $\bf\widetilde{p}\rm$ of probability measures defined by
\[
\widetilde{p}_\epsilon(dz)= \frac{\exp\left\{-\frac{u^*(x)z}{\epsilon}\right\}p_\epsilon(dz)}
{\int_{\mathbb{R}} \exp\left\{-\frac{u^*(x)z}{\epsilon}\right\} p_\epsilon(dz)}, \quad\epsilon> 0. 
\]
Note that the family $\bf\widetilde{p}\rm$ depends on $x$. Then inequality (\ref{E:delicat}) and condition 
(\ref{E:missing}) imply that
\begin{align}
p_\epsilon(A) \ge & \exp\left\{ -\frac{\Lambda^{(0)}(u^*(x))}{\epsilon}\right\}  \int_{\mathbb{R}} \exp \left\{ -\frac{u^*(x)}{\epsilon} z \right\} p_\epsilon(dz) \, \widetilde{p}_\epsilon(A) \nonumber \\
& \times \exp\left\{ - \frac{[\Lambda^{(0)}]^{*}(x) - u^*(x)(\eta(u^*(x)) - x)}{\epsilon} \right\} \nonumber \\
= & \exp \left\{\Lambda^{(1)}(u^*(x))\right\}\bigg( 1 + \epsilon \Lambda^{(2)}(u^*(x)) + \mathcal{O}(\epsilon^2) \bigg) \widetilde{p}_\epsilon(A) \nonumber \\
& \times \exp\left\{ - \frac{[\Lambda^{(0)}]^{*}(x) - u^*(x)(\eta(u^*(x)) - x)}{\epsilon}\right\} 
\label{E:finally}
\end{align}
as $\epsilon\rightarrow 0$.

We will next estimate the quantity 
\begin{equation}
\widetilde{p}_\epsilon(A)=1-\widetilde{p}_\epsilon(A^c)
\label{E:konf}
\end{equation} 
from below. This will be done using the upper estimate in the G\"{a}rtner-Ellis theorem. Let us denote by $\widetilde{\Lambda}^{(0)}$
the function defined by (\ref{E:ge1}) for the family $\bf\widetilde{p}\rm$ instead of the family $\bf{p}\rm$. Then it is not hard to see that
\begin{equation}
\widetilde{\Lambda}^{(0)}(v)=\Lambda^{(0)}(v+u^*(x))- \Lambda^{(0)}(u^*(x)),\quad v\in\widetilde{I},
\label{E:label}
\end{equation}
where $\widetilde{I}=I-u^{*}(x)$. The function $\widetilde{\Lambda}^{(0)}$ and the interval $\widetilde{I}$ depend on $x$. It is clear that $0\in\widetilde{I}$. Moreover,
\[
\left[\widetilde{\Lambda}^{(0)}\right]^{*}(y) = -\inf_{v\in\widetilde{I}}
\left\{yv+\widetilde{\Lambda}^{(0)}(v)\right\}\ge 0 
\]
Next, taking into account that $A^{c}$ is a closed set, and using the upper large deviations estimate in the G\"{a}rtner-Ellis theorem (see Theorem 2.3.6 in \cite{demzei:98}), we obtain
$$
\limsup_{\epsilon \to 0}\left[\epsilon \log \widetilde{p}_\epsilon(A^c)\right] \leq - \inf_{y \in A^c} \left[\widetilde{\Lambda}^{(0)}\right]^{*}(y).
$$

Set $\delta_A=\inf_{y \in A^c}\left[\widetilde{\Lambda}^{(0)}\right]^{*}(y)$. Using Remark \ref{R:nevsegda} and (\ref{E:label}), we see that the unique infimum of the function $\left[\widetilde{\Lambda}^{(0)}\right]^{*}$ on the real line is attained at the point
$$
y=\partial\left[\widetilde{\Lambda}^{(0)}\right]^{*}(0)=\Lambda^{(0)}(u^{*}(x))=x,
$$
and is equal to zero. Since $x\notin A^c$, and the set $A^c$ is closed, we have $\delta_A> 0$. Therefore, for every $\tau> 0$, there exists $\epsilon_{\tau}> 0$ such that
\begin{equation}
\widetilde{p}_{\epsilon}(A^c)\le\exp\left\{\frac{-\delta_A+\tau}{\epsilon}\right\},\quad 0<\epsilon<\epsilon_{\tau}.
\label{E:konfe}
\end{equation}

Fix any number $\tau> 0$ with $0<\tau<\delta_A$, and set $\gamma_A=\delta_A-\tau$. Then (\ref{E:konf}) and (\ref{E:konfe}) imply the following estimate:
\begin{equation}
\widetilde{p}_{\epsilon}(A)\ge 1-\exp\left\{\frac{-\gamma_A}{\epsilon}\right\},\quad 0<\epsilon<\epsilon_{\tau}.
\label{E:finn}
\end{equation}
Finally, using (\ref{E:finally}) and (\ref{E:finn}), we establish estimate (\ref{E:ram2}).

The proof of Theorem \ref{T:gGE} is thus completed.

\section{Affine processes}\label{S:affine}
Let $D$ be a non-empty Borel subset of the real Euclidian space $\RR^d$, equipped with the Borel $\sigma$-algebra $\cD$, and assume
that the affine hull of $D$ is the full space $\RR^d$. To $D$ we add a
point $\delta$ that serves as a `cemetery state'. Define
\[\wh{D} = D \cup \set{\delta}, \qquad \wh{\cD} = \sigma(\cD, \set{\delta}),\]
and equip $\wh{D}$ with the Alexandrov topology, in which any open
set with a compact complement in $D$ is declared an open
neighborhood of $\delta$.\footnote{Note that the topology of $\wh{\cD}$ enters our assumptions in a subtle way: We require later that $X$ is c\`adl\`ag on $\wh{\cD}$, which is a property for which the topology matters.} Any continuous function $f$ defined on $D$
is extended
to $\wh{D}$ by setting $f(\delta) = 0$.

Let $(\Omega, \cF, \FF)$ be a filtered measurable space, on which a family
$(\PP^x)_{x \in \wh{D}}$ of probability measures is defined, and
assume that $\cF$ is $\PP^x$-complete for all $x \in \wh{D}$ and
that the filtration $\FF$ is right continuous. Finally, let $X$ be a c\`adl\`ag process
taking values in $\wh{D}$, whose transition kernel
$$
 p_t(x,A) = \PP^x (X_t \in A), \qquad (t \ge 0, x \in \wh{D}, A \in \wh{\cD})
$$
is a normal time-homogeneous Markov kernel, for which $\delta$ is
absorbing. That is, $p_t(x,.)$ satisfies the following conditions:
\begin{enumerate}[(a)]
 \item $x \mapsto p_t(x,A)$ is $\wh{\cD}$-measurable for each $(t,A) \in \Rplus \times \wh{\cD}$. 
 \item $p_0(x,\set{x}) = 1$ for all $x \in \wh{D}$,
 \item $p_t(\delta,\set{\delta}) = 1$ for all $t \geq 0$
 \item $ p_t(x,\wh{D}) = 1 $ for all $(t,x) \in \Rplus \times
 \wh{D}$, and
 \item the Chapman-Kolmogorov equation
 \[p_{t+s}(x,d\xi) = \int p_t(y,d\xi) \,p_s(x,dy) \]
  holds for each $t,s \ge 0$ and $(x,d\xi) \in \wh{D} \times \wh{\cD}$.
\end{enumerate}

We equip $\RR^d$ with the canonical inner product $\scal{}{}$, and associate to $D$ the set $\cU \subseteq \CC^d$ defined by
$$
 \cU = \set{u \in \CC^d: \sup_{x \in D} \Re \scal{u}{x} < \infty}.
$$
Note that the set $\cU$ is the set of complex vectors $u$ such that the exponential function $x \mapsto e^{\scal{u}{x}}$ is bounded on $D$. It is easy to see that $\cU$ is a convex cone and always contains the set of purely imaginary vectors $i\RR^d$.

\begin{defn}[Affine processes]\label{Def:affine_process}
A stochastic process $X$ is called \emph{affine} with state space $D$, if the transition kernel $p_t(x,d\xi)$ of $X$ satisfies the following conditions:
\begin{enumerate}[(i)]
 \item It is stochastically continuous, i.e.~$\lim_{s \to t}p_s(x,.) = p_t(x,.)$ weakly for all $t \ge 0, x \in D$.
 \item The Fourier-Laplace transform of the kernel depends on the initial state in the following way: there exist functions $\Phi: \Rplus \times \cU \to \CC$ and $\psi: \Rplus \times \cU \to \CC^d$, such that
\begin{equation}\label{Eq:affine_property}
\int_D e^{\scal{\xi}{u}}p_t(x,d\xi) = \Phi(t,u) \exp(\scal{x}{\psi(t,u)})
\end{equation}
for all $t \in \Rplus$, $x \in D$, and $u \in \cU$.
\end{enumerate}
\end{defn}
\begin{rem}
Note that the previous definition does not specify $\psi(t,u)$ in a unique way. However, there is a natural unique choice for $\psi$ that will be discussed in Prop.~\ref{Prop:Phipsi_properties} below. Also note that as long as $\Phi(t,u)$ is non-zero, there exists $\phi(t,u)$ such that $\Phi(t,u) = e^{\phi(t,u)}$, and equality \eqref{Eq:affine_property} becomes
\begin{equation}\label{Eq:affine_property_small_phi}
\int_D e^{\scal{\xi}{u}}p_t(x,d\xi) = \exp\left\{\phi(t,u) +
\scal{x}{\psi(t,u)}\right\}.
\end{equation}
This is the essentially the definition that was used in \cite{Duffie2003}. Condition (\ref{Eq:affine_property_small_phi}) means that the Fourier-Laplace transform of the transition function is the exponential of an \emph{affine function} of $x$. This fact is usually interpreted as the reason for the name `affine process', even though affine functions also appear in other aspects of affine processes, e.g. in the coefficients of the infinitesimal generator, or in the differentiated semi-martingale characteristics. We prefer to use equality \eqref{Eq:affine_property} instead of equality \eqref{Eq:affine_property_small_phi}, since the former equality leads to a slightly more general definition that avoids the necessity of the a-priori assumption that the left hand side of \eqref{Eq:affine_property} is non-zero for all $t$ and $u$. 
\end{rem}

Before we start exploring the first simple consequences of
Definition~\ref{Def:affine_process}, additional notation will be introduced. For any $u \in \cU$, set
$\sigma(u) := \inf \set{t \ge 0: \Phi(t,u) = 0}$
and
$\cQ := \set{(t,u) \in \Rplus \times \cU: t < \sigma(u)}$,
and let $\phi$ be a function on $\cQ$ such that
\[\Phi(t,u) = e^{\phi(t,u)}\quad\mbox{for all}\quad (t,u) \in \cQ.\]
The uniqueness of $\phi$ will be discussed below. The functions $\phi$ and $\psi$ have the following properties 
(see \cite{KST2011}):

\begin{prop}\label{Prop:Phipsi_properties}
Let $X$ be an affine process on $D$. Then
\begin{enumerate}[(i)]
\item \label{Item:sigma_pos} The condition $\sigma(u) > 0$ holds for any $u \in \cU$.
\item The functions $\phi$ and $\psi$ are uniquely defined on $\cQ$ under the restriction that they are jointly continuous and satisfy $\phi(0,0) = \psi(0,0) = 0$.
\label{Item:uniqueness}
\item The function $\psi$ maps $\cQ$ into $\cU$. \label{Item:phipsi_range}
\item The functions $\phi$ and $\psi$ satisfy the \label{Item:semiflow}\emph{semi-flow} property. For any $u \in \cU$ and $t,s \ge 0$ with $t + s \le \sigma(u)$, the following conditions hold:
\begin{align*}
\phi(t+s,u) &= \phi(t,u) + \phi(s,\psi(t,u)), \quad \phi(0,u) = 0 \\
\psi(t+s,u) &= \psi(t,\psi(s,u)), \phantom{+ \phi(t,u)}\quad \psi(0,u) = u
\end{align*}
\end{enumerate}
\end{prop}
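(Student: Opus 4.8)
The plan is to establish Proposition~\ref{Prop:Phipsi_properties} by leveraging the semigroup/Markov structure encoded in the Chapman--Kolmogorov equation together with the defining affine property \eqref{Eq:affine_property}, following the approach of \cite{KST2011}. The starting point is to prove item~\eqref{Item:sigma_pos}: by stochastic continuity, $p_t(x,\cdot)\to p_0(x,\cdot)=\delta_x$ weakly as $t\to0$, and since $u\in\cU$ makes $\xi\mapsto e^{\scal{\xi}{u}}$ bounded on $D$, dominated convergence gives $\int_D e^{\scal{\xi}{u}}p_t(x,d\xi)\to e^{\scal{x}{u}}$ as $t\to0$. Plugging $x=x_0$ with $\scal{x_0}{\psi(t,u)}$ controlled, one sees $\Phi(t,u)\exp(\scal{x}{\psi(t,u)})\to e^{\scal{x}{u}}\neq0$ for each $x\in D$; since $D$ affinely spans $\RR^d$, comparing this for several values of $x$ forces $\psi(t,u)\to u$ and $\Phi(t,u)\to1$ as $t\to0$, whence $\Phi(t,u)\neq0$ for all small $t$, i.e.~$\sigma(u)>0$.

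Next I would treat uniqueness and continuity, item~\eqref{Item:uniqueness}. Continuity of $(t,u)\mapsto\Phi(t,u)\exp(\scal{x}{\psi(t,u)})$ in $t$ follows from stochastic continuity plus the bounded-convergence argument above, and joint continuity in $(t,u)$ from continuity of $u\mapsto e^{\scal{\xi}{u}}$ uniformly on the relevant sets. Fixing affinely independent points $x_0,\dots,x_d\in D$, one can solve for $\psi(t,u)$ and $\Phi(t,u)$ as continuous functions of the $d+1$ integrals $\int_D e^{\scal{\xi}{u}}p_t(x_j,d\xi)$; this yields continuity of $\Phi$ and $\psi$. Since $\Phi$ is continuous and nonvanishing on the connected set $\cQ$ (which contains $(0,u)$ where $\Phi=1$), there is a unique continuous logarithm $\phi$ on $\cQ$ with $\phi(0,0)=0$, and the normalization $\psi(0,0)=0=\psi(0,u)-u$ pins $\psi$ down (the value $\psi(0,u)=u$ is forced by the $t\to0$ limit). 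For item~\eqref{Item:phipsi_range}, that $\psi$ maps into $\cU$: since $e^{\scal{x}{\psi(t,u)}}=\Phi(t,u)^{-1}\int_D e^{\scal{\xi}{u}}p_t(x,d\xi)$ is a bounded function of $x\in D$ (the numerator is bounded and $\Phi(t,u)$ is a fixed nonzero constant once $t<\sigma(u)$), boundedness of $x\mapsto e^{\scal{x}{\psi(t,u)}}$ on $D$ is exactly the statement $\psi(t,u)\in\cU$.

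The semi-flow property, item~\eqref{Item:semiflow}, is the heart of the argument and, together with the bookkeeping around where things are well-defined, the main obstacle. The idea is to apply the Chapman--Kolmogorov equation (e) to the exponential functional:
\begin{align*}
\int_D e^{\scal{\xi}{u}}p_{t+s}(x,d\xi)
&=\int_D\!\Big(\int_D e^{\scal{\xi}{u}}p_t(y,d\xi)\Big)p_s(x,dy)
=\int_D \Phi(t,u)\,e^{\scal{y}{\psi(t,u)}}p_s(x,dy)\\
&=\Phi(t,u)\,\Phi(s,\psi(t,u))\,\exp\big(\scal{x}{\psi(s,\psi(t,u))}\big),
\end{align*}
where the second equality uses \eqref{Eq:affine_property} for $p_t$ and the third uses \eqref{Eq:affine_property} for $p_s$ with parameter $\psi(t,u)\in\cU$ (legitimate by item~\eqref{Item:phipsi_range}). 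On the other hand the affine property applied directly to $p_{t+s}$ gives $\Phi(t+s,u)\exp(\scal{x}{\psi(t+s,u)})$. Equating the two expressions and again invoking that $D$ affinely spans $\RR^d$, one separates the $x$-dependent and $x$-independent parts: $\psi(t+s,u)=\psi(s,\psi(t,u))$ and $\Phi(t+s,u)=\Phi(t,u)\Phi(s,\psi(t,u))$. Taking logarithms on $\cQ$ and using the uniqueness/normalization of $\phi$ converts the latter into $\phi(t+s,u)=\phi(t,u)+\phi(s,\psi(t,u))$; the boundary values $\phi(0,u)=0$, $\psi(0,u)=u$ come from item~\eqref{Item:uniqueness}. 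The delicate point throughout is checking that all the intermediate quantities are finite and that $(s,\psi(t,u))$ lies in $\cQ$ when $t+s\le\sigma(u)$, so that $\Phi(s,\psi(t,u))$ has a logarithm; this requires care with the definition of $\sigma$ and a short argument that $\Phi(t,u)\ne0$ and $\Phi(s,\psi(t,u))\ne0$ are equivalent to $\Phi(t+s,u)\ne0$ via the multiplicative identity just derived. I would also need the swap of integration order in the Chapman--Kolmogorov step, which is justified by Fubini/Tonelli since $|e^{\scal{\xi}{u}}|\le e^{\Re\scal{u}{\xi}}$ is bounded on $D$ by definition of $\cU$.
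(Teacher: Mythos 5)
The paper itself does not prove this proposition; it states it and refers the reader to \cite{KST2011}. Your reconstruction follows the same strategy as that reference: stochastic continuity for items (i)--(ii), boundedness of $x\mapsto e^{\scal{x}{\psi(t,u)}}$ for item (iii), and Chapman--Kolmogorov combined with the affine property, together with the affine span of $D$, for item (iv).

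There is, however, a genuine gap in your argument for item (i). You invoke weak convergence $p_t(x,\cdot)\to\delta_x$ on $\wh{D}$ together with ``dominated convergence'' to conclude that $\int_D e^{\scal{\xi}{u}}p_t(x,d\xi)\to e^{\scal{x}{u}}$. Weak convergence delivers convergence of integrals only against bounded \emph{continuous} test functions on $\wh{D}$, and dominated convergence is of no help here since there is no pointwise convergence of integrands to dominate --- the measures move, not the function. The function $\xi\mapsto e^{\scal{\xi}{u}}$, extended by zero at $\delta$, need not be continuous at $\delta$: for $u$ purely imaginary (always in $\cU$), one has $|e^{\scal{\xi}{u}}|\equiv 1$ on $D$ but the extension jumps to $0$ at $\delta$, so it is not a legitimate test function. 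The limit assertion therefore does not follow as written; in \cite{KST2011} this point is handled by a more careful argument. The rest of your sketch is sound: item (iii) is exactly right; your semiflow derivation yields $\psi(t+s,u)=\psi(s,\psi(t,u))$ rather than the stated $\psi(t,\psi(s,u))$, but the two are equivalent by relabeling $t\leftrightarrow s$; and the bookkeeping you flag --- checking $(s,\psi(t,u))\in\cQ$ via the multiplicative identity for $\Phi$, and the Fubini justification from boundedness of $|e^{\scal{\xi}{u}}|$ on $D$ for $u\in\cU$ --- are precisely the right points to attend to.
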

\begin{rem}

In the sequel, the functions $\phi$ and $\psi$ will always be chosen according to
Proposition~\ref{Prop:Phipsi_properties}.
\end{rem}

We now introduce the important notion of \emph{regularity}.
\begin{defn}\label{Def:regular}
An affine process $X$ is called \emph{regular} if the derivatives
\begin{align*}
F(u) = \frac{\partial \phi(t,u)}{\partial t}\Bigg |_{t=0+}, \qquad
R(u) = \frac{\partial \psi(t,u)}{\partial t}\Bigg |_{t=0+}
\end{align*}
exist for all $u \in \cU$ and are continuous at $u=0$.
\end{defn}

The next statement illustrates why the regularity is a crucial property. This statement was originally established by \cite{Duffie2003} for affine processes on the state-space $\RR^n \times \Rplus^m$.
\begin{prop}\label{Prop:LevyK}
 Let $X$ be a \emph{regular} affine process. Then there exist $\RR^d$-vectors $b,\beta^1,\ldots,\beta^d$; $d \times d$-matrices $a,\alpha^1,\ldots,\alpha^d$; real numbers $c,\gamma^1,\ldots,\gamma^d$, and signed Borel measures $m,\mu^1,\ldots,\mu^d$ on $\RR^d \setminus \set{0}$ such that the functions $F(u)$ and $R(u)$ can be represented as follows:
\begin{subequations}\label{Eq:FR_LK_form}
\begin{align}
F(u) &= \frac{1}{2}\scal{u}{a u} + \scal{b}{u} - c + \int_{\RR^d
\setminus \set{0}}{\left(e^{\scal{\xi}{u}} - 1 - \scal{
h(\xi)}{u}\right)\,m(d\xi)}\;,\\
R_i(u) &= \frac{1}{2}\scal{u}{\alpha^i u} + \scal{\beta^i}{u} -
\gamma^i + \int_{\RR^d \setminus \set{0}}{\left(e^{\scal{\xi}{u}} -
1 - \scal{ h(\xi)}{u}\right)\,\mu^i(d\xi)}\;.
\end{align}
\end{subequations}
In the previous formulas, $h(x) = x \Ind{\norm{x} \le 1}$ is a truncation function. In addition, for all $x \in D$, 
the quantities 
\begin{subequations}\label{Eq:AB_nu}
\begin{align}
A(x) &= a + x_1 \alpha^1 + \dotsm + x_d \alpha^d,\\
B(x) &= b + x_1 \beta^1 + \dotsm + x_d \beta^d,\\
C(x) &= c + x_1 \gamma^1 + \dotsm + x_d \gamma^d,\\
\nu(x,d\xi) &= m(d\xi) + x_1 \mu^1(d\xi) + \dotsm + x_d \mu^d(d\xi)
\end{align}
\end{subequations}
have the following properties: $A(x)$ is positive semidefinite, $C(x) \le 0$, and 
$$
\int_{\RR^d \setminus \set{0}}{\left( \norm{\xi}^2 \wedge 1\right)} \nu(x,d\xi) < \infty.
$$
Moreover, for $u \in \cU$ and $t \in [0,\sigma(u))$, the functions $\phi$ and $\psi$ satisfy
the following ordinary differential equations:
    \begin{subequations}\label{Eq:Riccati}
    \begin{align}
    \pd{}{t} \phi(t,u)&=F(\psi(t,u)), \quad \phi(0,u)=0\label{Eq:Riccati_F}\\
  \pd{}{t} \psi(t,u)&=R(\psi(t,u)), \quad \psi(0,u)=u\label{Eq:Riccati_R}.
    \end{align}
    \end{subequations}
\end{prop}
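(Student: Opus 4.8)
The plan is to follow the route of \cite{Duffie2003}, in the form generalized to arbitrary state spaces in \cite{KST2011}, organized in three blocks: (a) the Riccati equations, a soft consequence of the semi-flow property; (b) the identification of $F$ and $R$ on the purely imaginary fibre $i\RR^d$ by a L\'evy--Khintchine argument; (c) the extension to all of $\cU$ by analytic continuation, together with the admissibility constraints. For block (a), fix $u\in\cU$ and $t\in[0,\sigma(u))$. By the semi-flow property (Proposition \ref{Prop:Phipsi_properties}), for small $s\ge 0$ one has $\psi(t+s,u)=\psi(s,\psi(t,u))$ and $\phi(t+s,u)=\phi(t,u)+\phi(s,\psi(t,u))$; since $\psi(t,u)\in\cU$ and $X$ is regular, the right-hand sides are differentiable in $s$ at $s=0$, with $s$-derivatives $R(\psi(t,u))$ and $F(\psi(t,u))$. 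Handling the left derivative the same way, and using the joint continuity of $\phi$ and $\psi$, shows that $t\mapsto\phi(t,u)$ and $t\mapsto\psi(t,u)$ are $C^1$ on $[0,\sigma(u))$ and solve the Riccati equations \eqref{Eq:Riccati_F}--\eqref{Eq:Riccati_R}.

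For block (b), take $x\in D$ and $v\in\RR^d$. The affine property \eqref{Eq:affine_property} gives $\mathbb{E}^x[e^{i\scal{X_t}{v}}]=\Phi(t,iv)\exp(\scal{x}{\psi(t,iv)})$, the left side being, with the convention $e^{i\scal{\delta}{v}}=0$, the Fourier transform of the sub-probability law of $X_t$ on $D$ under $\PP^x$; let $\chi^x_t(v)$ denote it multiplied by $e^{-i\scal{x}{v}}$. Then $\chi^x_0\equiv 1$, each $\chi^x_t$ is continuous and positive definite with $|\chi^x_t|\le 1$, and, because $\phi(0,\cdot)=0$, $\psi(0,u)=u$ and $X$ is regular, $t\mapsto\chi^x_t(v)$ is differentiable at $t=0+$ with derivative $g_x(v):=F(iv)+\scal{x}{R(iv)}$, which by regularity is continuous at $v=0$ and satisfies $g_x(0)=\lim_{t\to 0}t^{-1}(p_t(x,D)-1)\le 0$. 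Since products of positive definite functions are positive definite, $(\chi^x_{t/n})^n$ is positive definite for every $n\ge 1$, and $(\chi^x_{t/n})^n\to\exp(tg_x)$ pointwise as $n\to\infty$; hence $\exp(tg_x)$ is positive definite for all $t\ge 0$, i.e.\ $-g_x$ is negative definite, and being continuous at the origin it is continuous everywhere. The L\'evy--Khintchine--Hunt representation then produces a positive semidefinite $Q(x)$, a vector $\ell(x)$, a number $c(x)=-g_x(0)\ge 0$ and a L\'evy measure $N(x,d\xi)$ with
\[
F(iv)+\scal{x}{R(iv)}=-\tfrac12\scal{v}{Q(x)v}+i\scal{\ell(x)}{v}-c(x)+\int_{\RR^d\setminus\set{0}}\bigl(e^{i\scal{\xi}{v}}-1-i\scal{h(\xi)}{v}\bigr)N(x,d\xi).
\]

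For block (c), observe that the left-hand side of the last display is affine in $x$; since a continuous negative definite function determines its Gaussian part, drift, killing rate and L\'evy measure through operations that are compatible with affine combinations, and since $\aff(D)=\RR^d$, there exist matrices $a,\alpha^1,\dots,\alpha^d$, vectors $b,\beta^1,\dots,\beta^d$, reals $c,\gamma^1,\dots,\gamma^d$ and signed measures $m,\mu^1,\dots,\mu^d$ for which $Q(x)=A(x)$, $c(x)=-C(x)$ and $N(x,\cdot)=\nu(x,\cdot)$ in the notation \eqref{Eq:AB_nu}, once $\ell(x)$ and the compensator term are absorbed into $B(x)$. The positive semidefiniteness of $A(x)$, the inequality $C(x)\le 0$ and the integrability of $\nu(x,\cdot)$ for $x\in D$ are exactly the admissibility properties of the L\'evy triplet of $-g_x$. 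This proves \eqref{Eq:FR_LK_form} for $u\in i\RR^d$. Finally, for $u$ in the interior of $\cU$ the Laplace transform $\int_D e^{\scal{\xi}{u}}p_t(x,d\xi)$ is finite, so $u\mapsto\phi(t,u)$ and $u\mapsto\psi(t,u)$, and hence $F$ and $R$, are holomorphic there; the right-hand sides of \eqref{Eq:FR_LK_form} are holomorphic there as well, since $u\in\cU$ and $x+\xi\in D$ force $\Re\scal{\xi}{u}$ to be bounded from above and the integrals to converge locally uniformly. As the two sides agree on $i\RR^d$, they agree on $\cU$ by analytic continuation; combined with block (a), this finishes the proof.

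The heart of the matter is block (b): deducing, from the mere positive definiteness of the sub-probability characteristic functions $\chi^x_t$ and their differentiability at $t=0$, that $-g_x$ is a genuine continuous negative definite function, and then that its L\'evy triplet depends affinely on $x$. The second point is delicate when $D$ is not convex and has no interior points in the naive sense; one exploits only that $\aff(D)=\RR^d$, as carried out in \cite{KST2011}. By contrast, the Riccati equations and the passage from $i\RR^d$ to $\cU$ are routine.
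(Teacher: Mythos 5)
The paper does not give its own proof of Proposition~\ref{Prop:LevyK}: the proof reads, in full, ``See \cite{KST2011}.'' (The result for the canonical state space $\Rplus^m\times\RR^n$ goes back to \cite{Duffie2003}.) So the comparison is between your sketch and the argument in the cited reference rather than anything written in this paper, and your sketch is a faithful reconstruction of it: (a) the Riccati system via the semi-flow identity and one-sided differentiability at $0+$, (b) the L\'evy--Khintchine/Hunt step obtained by showing that $(\chi^x_{t/n})^n\to\exp(tg_x)$ is positive definite so that $-g_x$ is continuous negative definite, and (c) the affine dependence of the triplet on $x$ using $\aff(D)=\RR^d$ together with analytic continuation from $i\RR^d$ to $\cU$. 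These are exactly the three pillars of the proof in \cite{KST2011}. Two small remarks. First, in block (b) the convergence $(\chi^x_{t/n}(v))^n\to\exp(tg_x(v))$ uses only differentiability of $t\mapsto\chi^x_t(v)$ at $t=0+$ for each fixed $v$, which is what regularity provides, and the continuity of $g_x$ at $v=0$ (again from regularity) is exactly what is needed to invoke Schoenberg's theorem and conclude that $-g_x$ is a \emph{continuous} negative definite function rather than merely negative definite; your sketch has this right. Second, in block (c) the passage from ``the triplet of $-g_x$ depends affinely on $x\in D$'' to the existence of globally defined parameters $(a,\alpha^i,b,\beta^i,c,\gamma^i,m,\mu^i)$ is precisely where $\aff(D)=\RR^d$ is used to pick $d+1$ affinely independent points of $D$ and extend the triplet map to an affine map on all of $\RR^d$; this is also where the $\mu^i$ and $m$ may become signed, since only the combinations $\nu(x,\cdot)$ for $x\in D$ need be positive. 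You flag this as the delicate point, correctly. I see no gap; the level of rigor is appropriate for a proof sketch of a result the paper itself treats as a black box.
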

\begin{rem}
 The equations \eqref{Eq:Riccati} are called \emph{generalized Riccati equations}, since they are classical Riccati equations when $m(d\xi)$ = $\mu^i(d\xi) = 0$. Moreover, equations \eqref{Eq:FR_LK_form} and \eqref{Eq:AB_nu} imply that $u \mapsto F(u) + \scal{R(u)}{x}$ is a function of L\'evy-Khintchine form for each $x \in D$.
\end{rem}
\begin{proof}
See \cite{KST2011}.\end{proof}

In general, the parameters $(a,\alpha^i, b, \beta^i, c, \gamma^i, m,
\mu^i)_{i \in \set{1, \dotsc, d}}$ appearing in the representations of $F$ and $R$ in (\ref{Eq:Riccati_F}) and 
(\ref{Eq:Riccati_R}) have to satisfy
additional conditions, called the \emph{admissibility} conditions. These conditions
guarantee the existence of an affine Markov process $X$ with state space $D$ and with prescribed $F$ and $R$. It is clear that such conditions should depend strongly
on the geometry of the (boundary of the) state space $D$. Finding such (necessary and sufficient) conditions on the parameters for different types of
state spaces has been the focus of several publications. For $D =
\Rplus^m \times \RR^n$, the admissibility conditions were
derived in \cite{Duffie2003}. For the cone of
semi-definite matrices $D = S_d^+$, such conditions were found in \cite{Cuchiero2009}, and for symmetric irreducible cones, the admissibility conditions were found in \cite{CKMT2011}. Finally, for affine diffusions ($m = \mu^i = 0$) on polyhedral cones and on quadratic state spaces, the admissiblility conditions were given in \cite{Spreij2010}. 

\begin{defn}
We call the state space $D = \Rplus^m \times \RR^n$ with $ m,n \geq 0$ the \emph{canonical state space}.
\end{defn}

Affine processes on canonical state spaces are completely characterized in \cite{Duffie2003} in terms of the admissibility conditions imposed on $F$ and $R$. Affine processes on canonical state spaces have continuous trajectories (such processes are called continuous affine processes) if and only if the functions $F$ and $R$ satisfy the admissibility conditions and are polynomials of degree at most $2$ (see Proposition \ref{Prop:LevyK}).

\section{Homogenization procedure}\label{S:homben}

In this section, we consider continuous, affine processes on the canonical state space $D=\Rplus^m \times \RR^n$. We will next introduce a natural homogenization procedure, which allows to analyze the short-time asymptotics of the law of continuous affine processes. In the case of affine processes, the homogenization leads in fact to real analytic expansions with respect to the homogenization parameter. 

The following lemmas introduce the homogenization procedure.
\begin{lem}\label{L:hom}
Let $\psi: \cU \times \RR_{\geq0} \to \cU$ be the unique solution of the equation
$$\frac{\partial}{\partial t}\psi(u,t) = R \big(\psi(u,t)\big), \quad \psi(u,0)=u\in \cU,$$
where $R : \cU \rightarrow \CC^d$ is a quadratic polynomial. 
Then, for every $\epsilon > 0$, the function 
$$\psi^\epsilon(u,t):= \epsilon \psi \Big(\frac{u}{\epsilon}, \epsilon t\Big)$$
solves the equation
$$\frac{\partial}{\partial t}\psi^\epsilon(u,t) = R^\epsilon \big(\psi^\epsilon(u,t)\big), \quad \psi^\epsilon(u,0)=u$$
with $R^\epsilon (u):= \epsilon^2R\big(\epsilon^{-1}u\big)$ for $u \in \cU$.

Analogously, let $\phi: \cU \times \RR_{\geq0} \to \mathbb{C}$ be the unique solution of the equation
$$\frac{\partial}{\partial t}\psi(u,t) = F \big(\psi(u,t)\big), \quad \phi(u,0)=0. $$
Then, for every $\epsilon> 0$, the function 
$$\phi^\epsilon(u,t):= \epsilon \phi \Big(\frac{u}{\epsilon}, \epsilon t\Big)$$
solves the equation
$$\frac{\partial}{\partial t}\phi^\epsilon(u,t) = F^\epsilon \big(\psi^\epsilon(u,t)\big), \quad \phi^\epsilon(u,0)=0$$
with $F^\epsilon (u):= \epsilon^2F\big(\epsilon^{-1}u\big)$ for $u \in \cU$.
\end{lem}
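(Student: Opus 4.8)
The plan is to prove both assertions by a direct change of variables in the generalized Riccati equations, using nothing beyond the chain rule together with the fact that $\cU$ is a convex cone.

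First I would check that $\psi^\epsilon$ is well defined and $\cU$-valued: for $\epsilon>0$ and $u\in\cU$ one has $u/\epsilon\in\cU$ since $\cU$ is a cone, and for each admissible $t$ the vector $\psi(u/\epsilon,\epsilon t)$ lies in $\cU$ by hypothesis, hence $\psi^\epsilon(u,t)=\epsilon\,\psi(u/\epsilon,\epsilon t)\in\cU$ as well. Differentiating this defining relation in $t$ and applying the chain rule gives
$$\frac{\partial}{\partial t}\psi^\epsilon(u,t)=\epsilon\cdot\epsilon\,(\partial_t\psi)\big(u/\epsilon,\epsilon t\big)=\epsilon^2\,R\big(\psi(u/\epsilon,\epsilon t)\big).$$
Now substitute $\psi(u/\epsilon,\epsilon t)=\epsilon^{-1}\psi^\epsilon(u,t)$ on the right-hand side; it becomes $\epsilon^2 R\big(\epsilon^{-1}\psi^\epsilon(u,t)\big)=R^\epsilon\big(\psi^\epsilon(u,t)\big)$, which is exactly the asserted equation, while the initial condition is immediate from $\psi^\epsilon(u,0)=\epsilon\,\psi(u/\epsilon,0)=\epsilon\cdot(u/\epsilon)=u$. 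Uniqueness of $\psi^\epsilon$ as a solution follows from the local Lipschitz property of the polynomial $R^\epsilon$. If desired, one may additionally observe that $R^\epsilon$ is again a quadratic polynomial, with the same quadratic part as $R$ and with linear and constant terms scaled by $\epsilon$ and $\epsilon^2$ respectively, which explains why the homogenization stays within the class of generalized Riccati equations.

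The computation for $\phi^\epsilon$ has the same shape: differentiating $\phi^\epsilon(u,t)=\epsilon\,\phi(u/\epsilon,\epsilon t)$ yields $\partial_t\phi^\epsilon(u,t)=\epsilon^2 F\big(\psi(u/\epsilon,\epsilon t)\big)=\epsilon^2 F\big(\epsilon^{-1}\psi^\epsilon(u,t)\big)=F^\epsilon\big(\psi^\epsilon(u,t)\big)$, and $\phi^\epsilon(u,0)=\epsilon\,\phi(u/\epsilon,0)=0$. The only genuine bookkeeping issue, and the closest thing here to an obstacle, is the $t$-domain: since generalized Riccati equations need not have global solutions, the statement should be read on the maximal interval of existence, on which the substitution $t\mapsto\epsilon t$ matches the time domain of $\psi^\epsilon(u,\cdot)$ with $\epsilon^{-1}$ times that of $\psi(u/\epsilon,\cdot)$. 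Once this is in place, everything reduces to the chain-rule identities above.
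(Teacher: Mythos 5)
Your proof is correct and is exactly the chain-rule computation the authors have in mind; the paper itself gives no proof, stating only that ``the proof of Lemma~\ref{L:hom} is simple, and we leave it as an exercise for the reader.'' Your additional observations---that $\cU$ being a cone makes $\psi^\epsilon$ well defined and $\cU$-valued, that $R^\epsilon$ remains quadratic with the quadratic part unscaled while the linear and constant parts pick up factors $\epsilon$ and $\epsilon^2$, and that the substitution $t\mapsto\epsilon t$ rescales the maximal interval of existence by $\epsilon^{-1}$---are all accurate and are precisely the bookkeeping the paper suppresses.
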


The proof of Lemma \ref{L:hom} is simple, and we leave it as an exercise for the reader.
\begin{lem}\label{L:yield}
Under the previous assumptions, the limit
$\lim_{\epsilon \to 0}\psi^\epsilon=\psi^{(0)}$
exists uniformly on compact sets in $\cU \times \RR_{\geq0}$. Furthermore,
\begin{equation}
\psi^\epsilon (u,t) = \psi ^{(0)}(u,t)+\epsilon \psi^{(1)}(u,t)+\sum_{n\geq2} \epsilon^{n} \psi^{(n)}(u,t)
\label{E:coeffi}
\end{equation}
is a convergent power series expansion for small $\epsilon > 0$. The coefficient functions in (\ref{E:coeffi}) satisfy certain ordinary differential equations, i.e.,~in particular,
$$\frac{\partial}{\partial t}\psi^{(0)}(u,t) = R^{(0)} \big(\psi^{(0)}(u,t)\big), \quad \psi^{(0)}(u,0)=u \, ,$$
and
$$\frac{\partial}{\partial t}\psi^{(1)}(u,t) = \frac{\partial}{\partial \epsilon}\Big\vert_{\epsilon=0}R^{\epsilon} \big(\psi^{(0)}(u,t)\big) \psi^{(1)}(u,t), \quad \psi^{(1)}(u,0)=0.$$
For $n\ge 2$, the equations for the coefficient functions involve higher order derivatives. In complete analogy,
the limit $\lim_{\epsilon \to 0}\phi^\epsilon=\phi^{(0)}$
exists uniformly on compact sets in $\cU \times \RR_{\geq0}$. Furthermore
$$\phi^\epsilon (u,t) = \phi ^{(0)}(u,t)+\epsilon \phi^{(1)}(u,t)+\sum_{n\geq2} \epsilon^{u} \phi^{(n)}(u,t) \, ,$$
for small enough values of $ \epsilon $.
\end{lem}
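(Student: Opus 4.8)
The plan is to recognize the family $\{\psi^\epsilon\}$ as a regular (analytic) perturbation, in the parameter $\epsilon$, of the homogenized equation, and then to read off the coefficient equations by matching powers of $\epsilon$. First I would make the rescaled vector fields explicit. Since $R$ is a quadratic polynomial, decompose $R = R_2 + R_1 + R_0$ into its homogeneous parts (quadratic, linear, constant); then directly from $R^\epsilon(u) = \epsilon^2 R(\epsilon^{-1}u)$ one gets
\[
R^\epsilon(u) = R_2(u) + \epsilon\,R_1(u) + \epsilon^2 R_0 =: R^{(0)}(u) + \epsilon\,R^{(1)}(u) + \epsilon^2 R^{(2)},
\]
so $R^\epsilon$ is a \emph{polynomial} in $\epsilon$ whose coefficients are polynomial in $u$; in particular $(\epsilon,u)\mapsto R^\epsilon(u)$ is jointly real-analytic and $R^{(0)} = R_2$. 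Since for continuous affine processes on the canonical state space $F$ is likewise a polynomial of degree at most two (Proposition \ref{Prop:LevyK}), the same computation gives $F^\epsilon = F^{(0)} + \epsilon F^{(1)} + \epsilon^2 F^{(2)}$ with $F^{(0)}$ the quadratic part of $F$.

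Next I would invoke the classical theorem on analytic dependence of solutions of ODEs on a parameter: because the right-hand side of $\partial_t\psi^\epsilon = R^\epsilon(\psi^\epsilon)$, $\psi^\epsilon(u,0)=u$, is jointly analytic in $(\epsilon,\psi^\epsilon)$ and the initial datum does not depend on $\epsilon$, the map $(\epsilon,u,t)\mapsto\psi^\epsilon(u,t)$ is real-analytic on the open set where the solution exists. Hence, for every compact $K$ contained in the domain of existence of $\psi^{(0)}$ there is $\epsilon_0>0$ such that on $K$ the solution $\psi^\epsilon$ exists for $0<\epsilon<\epsilon_0$ and equals a power series $\psi^\epsilon=\sum_{n\ge 0}\epsilon^n\psi^{(n)}$ converging uniformly on $K$; letting $\epsilon\to 0$ inside the series gives $\psi^\epsilon\to\psi^{(0)}$ uniformly on compacts, which is the first assertion (the cruder Gr\"onwall/continuous-dependence estimate, using $\|R^\epsilon-R^{(0)}\|_K=O(\epsilon)$, yields this limit directly and can be used first to fix a common domain of existence). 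One must also observe that $\psi^\epsilon$ stays in $\cU$: since $\cU$ is a cone and $\psi(u/\epsilon,\epsilon t)\in\cU$ by Proposition \ref{Prop:Phipsi_properties}(iii), also $\psi^\epsilon(u,t)=\epsilon\,\psi(u/\epsilon,\epsilon t)\in\cU$.

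Then I would substitute the series into the equation, Taylor-expand $R^\epsilon(\psi^{(0)}+\epsilon\psi^{(1)}+\cdots)$, and collect powers of $\epsilon$. Order $\epsilon^0$ yields $\partial_t\psi^{(0)}=R^{(0)}(\psi^{(0)})$, $\psi^{(0)}(u,0)=u$; order $\epsilon^1$ yields the linearization of this flow with the extra forcing coming from $R^{(1)}$, i.e.~$\partial_t\psi^{(1)}=DR^{(0)}(\psi^{(0)})\,\psi^{(1)}+R^{(1)}(\psi^{(0)})$, $\psi^{(1)}(u,0)=0$, which is the order-one equation appearing in the statement; and at order $\epsilon^n$, $n\ge 2$, one obtains a linear equation for $\psi^{(n)}$ whose inhomogeneity is a polynomial in $\psi^{(0)},\dots,\psi^{(n-1)}$ built from the (finitely many nonzero) derivatives of $R^{(0)},R^{(1)},R^{(2)}$ — this is where the ``higher order derivatives'' enter. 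The claims for $\phi^\epsilon$ follow by the same scheme applied to $\partial_t\phi^\epsilon=F^\epsilon(\psi^\epsilon)$, $\phi^\epsilon(u,0)=0$: since $F^\epsilon$ is polynomial in $\epsilon$ and $\psi^\epsilon$ is already analytic in $\epsilon$, the function $t\mapsto\int_0^t F^\epsilon(\psi^\epsilon(u,s))\,ds$ is analytic in $\epsilon$, whence $\phi^\epsilon=\sum_{n\ge 0}\epsilon^n\phi^{(n)}$ with $\partial_t\phi^{(0)}=F^{(0)}(\psi^{(0)})$, $\phi^{(0)}(u,0)=0$, and the higher coefficients solving explicit linear ODEs driven by polynomial expressions in the $\psi^{(k)}$.

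The step I expect to be the main obstacle is the analytic-perturbation bookkeeping on the correct domain rather than the formal matching of powers: $R^{(0)}$ is homogeneous quadratic, so $\psi^{(0)}$ may blow up in finite time, and one has to interpret ``uniformly on compact sets'' relative to the maximal domain of existence and check that $\epsilon_0$ can be chosen uniformly over a given compact $K$. If one prefers to avoid citing the abstract analytic-dependence theorem, the alternative is a majorant argument: run the Picard iteration for $\psi^\epsilon$ treating $\epsilon$ as a formal variable and dominate the iterates by the solution of a scalar Riccati-type majorant ODE to obtain a common positive radius of convergence in $\epsilon$; this is elementary but more laborious.
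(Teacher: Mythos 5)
Your argument takes the same route as the paper: you decompose $R^\epsilon$ and $F^\epsilon$ into polynomials in $\epsilon$ (the coefficients being the homogeneous parts of $R$ and $F$ of complementary degrees) and then appeal to analytic dependence of ODE solutions on a parameter, from which the convergent power series and the coefficient ODEs follow by matching orders. The paper's proof is only a two-line appeal to \emph{standard results on differential equations with polynomial vector fields}; you have supplied exactly the details it leaves implicit (the homogeneous decomposition, the analytic-dependence theorem or a Picard/majorant substitute, the invariance of $\cU$, the domain-of-existence caveat), and your first-order equation $\partial_t\psi^{(1)}=DR^{(0)}(\psi^{(0)})\,\psi^{(1)}+R^{(1)}(\psi^{(0)})$ is the standard variational form that the displayed formula for $\psi^{(1)}$ in the lemma is evidently intended to express.
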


\begin{proof}
Observe that $R^\epsilon = R^{(0)} + \epsilon R^{(1)} +\frac{\epsilon^2}{2}R^{(2)} $ and
$F^\epsilon = F^{(0)} + \epsilon F^{(1)} +\frac{\epsilon^2}{2}F^{(2)}.$
Hence, the vector fields appearing in the equation in Lemma \ref{L:yield} are polynomial in $ u $ and $ \epsilon$. Standard results on differential equations with polynomial vector fields yield the assertions in Lemma \ref{L:yield}, in particular, the real analyticity of the solution with respect to $\epsilon$.
\end{proof}

Let $X$ be an affine diffusion process with the corresponding functions $F$ and $R$. We can extend the solutions of the Riccati equations described above to maximal domains for $ u \in \mathbb{R}^d$, i.e., consider maximal local flows on $ \mathbb{R}^d $ with the vector fields $ F^\epsilon $ and $ R^\epsilon $. 
By $\hat{\Lambda}^{(i)}$, $i\ge 0$, are denoted the functions appearing in the following power series expansion in $\epsilon$:
\begin{equation}
\hat{\Lambda}^{(0)}(u) + \epsilon\hat{\Lambda}^{(1)}(u)+... := \phi^\epsilon(-u,1) + \langle x, \psi^\epsilon(-u,1) \rangle \, ,
\label{E:suppress}
\end{equation}
They are the solutions of the Riccati equations appearing in the previous lemmas. Note that we suppress the dependence on the initial value $x$ on the left-hand side of (\ref{E:suppress}). The functions $\hat{\Lambda}^{(i)}$ exist as extended real numbers for $ u \in \mathbb{R}^d$.
\begin{rem}\rm
If the expression on right-hand side of (\ref{E:suppress}) is finite, then the power series on the left-hand side converges absolutely for sufficiently small values of $ \epsilon $.
\end{rem}

\begin{rem}
For continuous affine processes, the homogenization procedure leads to the following representation:
\begin{align}
E \left[ \exp\left\{ - \langle \frac{u}{\epsilon}, X_\epsilon \rangle \right\} \right]& 
= \int_{D} \exp \left\{ - \langle \frac{u}{\epsilon}, z \rangle \rangle \right\} p_\epsilon (dz) \nonumber \\
& = \exp \left\{ \frac{\hat{\Lambda}^{(0)}(u)}{\epsilon} + \hat{\Lambda}^{(1)}(u) + ... \right\},
\label{E:stup}
\end{align}
where $ u $ is such that the expressions on both sides of (\ref{E:stup}) are finite for small enough values of 
$ \epsilon$.
\end{rem}

The representation in (\ref{E:stup}) valid for any continuous affine process was a motivation for us for introducing condition (\ref{E:missing}) used in the previous sections. However, the expansion in (\ref{E:stup}) is a little different from that in 
(\ref{E:missing}).

\section{Example: The Heston model}\label{S:Hesmod}
In this section, we find explicit formulas for the functions $\Lambda^{(i)}$, $0\le i\le 2$, associated with the  log-price process in the Heston model. Let us consider the following correlated Heston model: 
\begin{align}
dX_t = {} & (r + k V_t ) dt + \sqrt{V_t} dW_{1,\,t}, \nonumber \\
dV_t = {} & (a - bV_t) dt + \sigma \sqrt{V_t} dW_{2,\,t}, 
\label{E:inone}
\end{align}
where $r,\,k\in\mathbb{R}$, $a,\,b\ge 0$, $\sigma> 0$, and $W_{1,\,t}$ and $W_{2,\,t}$ are standard Brownian
motions with $d\langle W_1,W_2\rangle_t=\rho dt$. We assume that the correlation coefficient $\rho$ satisfies the condition $-1<\rho< 1$.
In (\ref{E:inone}), $X$ is the log-price process, and $V$ is the variance process. The initial conditions for 
the processes $X$ and $V$ are denoted by $x_0$ and $v_0$, respectively.
The Heston model was introduced in \cite{heston}. Note that in the present paper we consider the Heston model in which both the log-price and the variance equations contain drift terms generated by affine functions. Very often, e.g., in \cite{forjac:09,
fj,fjm,fjl,JR}, a special Heston model where $k=-\frac{1}{2}$ and $r=0$ is studied. 
An extended Heston model, in which the defining equations contain affine drift terms, is discussed in \cite{JM}.

The process $X$ is not an affine process. It is a projection of the two-dimensional affine process $(X,V)$ onto the first coordinate. The moment generating function of $X_t$ is given by
$
M_t(u) = \E{\exp\{ u X_t \}} = \exp\left\{ C(u,t) + D(u,t) v_0 + u x_0 \right\},
$
where
$$
C(u,t)=rut+\frac{a}{\sigma^2}\bigg[(b-\rho\sigma u+d(u))t-2\log\Big(\frac{1-g(u)e^{d(u)t}}{1-g(u)}\Big)\bigg], 
$$
$$
D(u,t)=\frac{b+d(u)-\rho\sigma u}{\sigma^2} \bigg(\frac{1-e^{d(u)t}}{1-g(u)e^{d(u)t}}\bigg),
$$
$$
g(u)=\frac{b-\rho\sigma u+d(u)}{b-\rho\sigma u-d(u)},
$$
and
$$
d(u)=\sqrt{(\rho\sigma u - b)^2 - \sigma^2(2ku+u^2)}
$$
(see \cite{as}). Here and in the sequel, the symbol $\sqrt{\cdot}$ stands for the principal square root function. We will explain below the meaning of the logarithmic function appearing in the expression for the 
function $C$ (see the discussion after formula (\ref{E:nos1})). Note that for $u=0$, the expressions for the functions $C$ and $D$ should be understood in the limiting sense. More precisely, 
$$
C(0,t)=\lim_{u\rightarrow 0}C(u,t)=0\quad\mbox{and}\quad D(0,t)=\lim_{u\rightarrow 0}D(u,t)=0
$$ 
for all $t> 0$.

It is clear that
$$
\E{\exp\{-\frac{u}{t}X_t \}} = \exp\left\{ C(-\frac{u}{t},t) + D(-\frac{u}{t},t) v_0 -\frac{u}{t}x_0 \right\}.
$$
Denote 
$\Lambda(u,t)=t\log\E{\exp\{-\frac{u}{t}X_t \}}$.
Then
\begin{equation}
\Lambda(u,t)=tC(-\frac{u}{t},t)+tD(-\frac{u}{t},t) v_0-ux_0.
\label{E:endeq}
\end{equation}
Next, set $A(u)=b-\rho\sigma u$. It is not hard to see that
\begin{align*}
&D(u,t)=\frac{1}{\sigma^2}(A(u)+d(u))\frac{1-e^{d(u)t}}{1-\frac{A(u)+d(u)}
{A(u)-d(u)}e^{d(u)t}} \\
&=\frac{1}{\sigma^2}(A(u)^2-d(u)^2)\frac{\sinh\frac{d(u)t}{2}}
{d(u)\cosh\frac{d(u)t}{2}+A(u)\sinh\frac{d(u)t}{2}}. 
\end{align*}
Moreover, 
\begin{align*}
C(u,t)&=rut+\frac{a}{\sigma^2}\Bigg[(A(u)+d(u))t-2\log \Bigg(\frac{1-\frac{A(u)+d(u)}{A(u)-d(u)}
e^{d(u)t}}{1-\frac{A(u)+d(u)}{A(u)-d(u)}}\Bigg)\Bigg]\\
&= rut+\frac{a}{\sigma^2}\Bigg[A(u)t-2\log \frac{d(u)\cosh\frac{d(u)t}{2}+A(u)\sinh\frac{d(u)t}{2}}{d(u)}\Bigg].
\end{align*}

Using the previous formula, we obtain
\begin{align}
&C\Big(-\frac{u}{t},t\Big)=-ru \nonumber \\
&+\frac{a}{\sigma^2}\Bigg[bt+\rho\sigma u-2 \log \frac{d(-\frac{u}{t})t
\cosh\frac{d(-\frac{u}{t})t}{2}+(bt+\rho\sigma u)\sinh\frac{d(-\frac{u}{t})t}{2}}{d(-\frac{u}{t})t}\Bigg].
\label{E:kem33}
\end{align}
We also have
$$
A\Big(-\frac{u}{t}\Big)=b+\rho\sigma \frac{u}{t}, 
$$
$$
A^2\Big(-\frac{u}{t}\Big)=b^2+2b\rho\sigma \frac{u}{t} + \rho^2 \sigma^2 \frac{u^2}{t^2},
$$
\begin{align*}
d^2\Big(-\frac{u}{t}\Big)={} & -\frac{u^2(1-\rho^2)\sigma^2}{t^2}+\frac{2\sigma u(k\sigma+b\rho)}{t}+b^2,\\
\end{align*}
\begin{align*}
&\frac{1}{\sigma^2}\bigg(A^2\Big(-\frac{u}{t}\Big)-d^2\Big(-\frac{u}{t}\Big)\bigg)
= \frac{u^2}{t^2}-\frac{2ku}{t},
\end{align*}
and
\begin{align}
&D\Big(-\frac{u}{t},t\Big)=\Big(\frac{u^2}{t^2} -\frac{2ku}{t}\Big)\frac{\sinh\frac{d\left(-\frac{u}{t}\right)t}{2}}{d\left(-\frac{u}{t}\right)\cosh\frac{d\left(-\frac{u}{t}\right)t}{2}
+A\left(-\frac{u}{t}\right)\sinh\frac{d\left(-\frac{u}{t}\right)t}{2}}. 
\label{E:no1}
\end{align}

Let us denote by $Z$ the set of such real numbers $u$ that the expressions on the right-hand side of 
(\ref{E:kem33}) and (\ref{E:no1}) are finite for all small enough values of $t$, and put
$$
\hat{S}(u,t)=d\Big(-\frac{u}{t}\Big)\frac{t}{2}.
$$
It is easy to see that
$$
\hat{S}(u,t)=\frac{1}{2}\sqrt{-u^2(1-\rho^2)\sigma^2+2tu(k\sigma^2+b\rho\sigma)+t^2b^2}.
$$
In the previous formula, $t$ is a real number. Therefore, for every real number $u\neq 0$, $\hat{S}(u,t)$ is purely imaginary for 
all numbers $t$ with $|t|$ small enough. For such $u$ and $t$,
$\hat{S}(u,t)=iS(u,t)$, where
\begin{equation}
S(u,t)=\frac{1}{2}\sqrt{u^2(1-\rho^2)\sigma^2-2tu(k\sigma^2+b\rho\sigma)-t^2b^2}
\label{E:ut}
\end{equation}
is a real number. It follows that
\begin{align}
&tC\Big(-\frac{u}{t},t\Big)
=-tru \nonumber \\
&+t\frac{a}{\sigma^2}\Bigg[bt+\rho\sigma u-2 \log \frac{2S(u,t)
\cos S(u,t)+(bt+\rho\sigma u)\sin S(u,t)}{2S(u,t)}\Bigg].
\label{E:kems3}
\end{align}
and 
\begin{align}
&tD\Big(-\frac{u}{t},t\Big)
=\Big(u^2 -2tku\Big)\frac{\sin S(u,t)}
{2S(u,t)\cos S(u,t)
+(bt+\rho\sigma u)\sin S(u,t)}.
\label{E:nos1}
\end{align}

Our next goal is to introduce an additional condition under which the logarithmic function 
appearing in formula (\ref{E:kems3}) exists, and the expressions on the right-hand sides of 
(\ref{E:kems3}) and (\ref{E:nos1}) are finite. Recall that we have assumed that $u\neq 0$ and $|t|$ is small enough.
Set 
$$
\widetilde{S}(u)=\lim_{t\rightarrow 0}\left[2S(u,t)\cos S(u,t)+(bt+\rho\sigma u)\sin S(u,t)\right].
$$
Then, we have
$$
\lim_{t\rightarrow 0}S(u,t)=\frac{1}{2}|u|\sigma\sqrt{1-\rho^2}
$$
and
$$
\widetilde{S}(u)=|u|\sigma\sqrt{1-\rho^2}\cos\frac{|u|\sigma\sqrt{1-\rho^2}}{2}+u\sigma\rho 
\sin\frac{|u|\sigma\sqrt{1-\rho^2}}{2},
$$

Let $\rho\neq 0$, and assume that
\begin{equation}
-\frac{2}{\sigma\sqrt{1-\rho^2}}\arctan\frac{\sqrt{1-\rho^2}}{\rho}< u<
\frac{2}{\sigma\sqrt{1-\rho^2}}\left(\pi-\arctan\frac{\sqrt{1-\rho^2}}{\rho}\right).
\label{E:condd1}
\end{equation}
The restriction in (\ref{E:condd1}) means that the variable $u$ 
is bounded from below by the largest negative root
of the function 
$$
\widehat{S}(u)=\sqrt{1-\rho^2}\cos\frac{u\sigma\sqrt{1-\rho^2}}{2}+\rho 
\sin\frac{u\sigma\sqrt{1-\rho^2}}{2},
$$
and from above by the smallest positive root of the same function. Note that $\widehat{S}(0)> 0$.
Therefore, we have $\widehat{S}(u)> 0$, for all $u$ satisfying the condition in (\ref{E:condd1}).

It is easy to see that $\widetilde{S}(u)=\sigma|u|\widehat{S}(u)$ for all $u\neq 0$, satisfying the condition
in (\ref{E:condd1}). Hence, $\widetilde{S}(u)> 0$, under the same restrictions on $u$. It follows from 
(\ref{E:nos1}) that for all $u\neq 0$ such that (\ref{E:condd1}) holds, the right-hand side of 
(\ref{E:nos1}) is eventually finite as $t\rightarrow 0$, and moreover
\begin{equation}
\lim_{t\rightarrow 0}tD\Big(-\frac{u}{t},t\Big)=\frac{u\sin\frac{u\sigma\sqrt{1-\rho^2}}{2}}
{\sigma\left(\sqrt{1-\rho^2}\cos\frac{u\sigma\sqrt{1-\rho^2}}{2}+\rho\sin\frac{u\sigma\sqrt{1-\rho^2}}{2}
\right)}.
\label{E:li1}
\end{equation}
In addition, the expression under the logarithm sign in (\ref{E:kems3}) is eventually positive, and
\begin{equation}
\lim_{t\rightarrow 0}tC\Big(-\frac{u}{t},t\Big)=0,
\label{E:li2}
\end{equation}

In the case where $\rho=0$, the condition in (\ref{E:condd1}) becomes
\begin{equation}
-\frac{\pi}{\sigma}< u<\frac{\pi}{\sigma}.
\label{E:condd2}
\end{equation}
The analysis here proceeds similarly to that in the previous case.

The next statement provides explicit expressions for the function $\Lambda^{(0)}$. This statement was obtained
in \cite{fj} (see formula (2) in \cite{fj}, see also \cite{fjl}) in a special case where $k=-\frac{1}{2}$ and $r=0$. 
\begin{thm}\label{T:te1}
Suppose $\rho\neq 0$ and condition (\ref{E:condd1}) holds.
Then $u\in Z$ and the following formula is valid:
\begin{equation}
\Lambda^{(0)}(u)=\frac{v_0u\sin\frac{u\sigma\sqrt{1-\rho^2}}{2}}{\sigma\Big(\sqrt{1-\rho^2}
\cos\frac{u\sigma\sqrt{1-\rho^2}}{2}+\rho \sin\frac{u\sigma\sqrt{1-\rho^2}}{2}\Big)}-x_0u.
\label{E:fo1}
\end{equation}
If $\rho=0$ and condition (\ref{E:condd2}) holds, then $u\in Z$ and
$$
\Lambda^{(0)}(u)=\frac{v_0u}{\sigma}\tan\frac{u\sigma}{2}-x_0u.
$$\end{thm}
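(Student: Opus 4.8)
The plan is to combine the explicit formula (\ref{E:endeq}) for $\Lambda(u,t)$ with the two limit computations (\ref{E:li1}) and (\ref{E:li2}) carried out just above the statement. Taking $\epsilon=t$ and using the moment generating function of $X_t$, the defining formula (\ref{E:ge1}) for the log-price process reads
$$
\Lambda^{(0)}(u)=\lim_{t\to 0}\,t\log\mathbb{E}\!\left[\exp\left\{-\tfrac{u}{t}X_t\right\}\right]=\lim_{t\to 0}\Lambda(u,t)=\lim_{t\to 0}\left[tC\!\left(-\tfrac{u}{t},t\right)+v_0\,tD\!\left(-\tfrac{u}{t},t\right)-ux_0\right].
$$
So the whole statement reduces to evaluating the two scalar limits $\lim_{t\to0}tC(-u/t,t)$ and $\lim_{t\to0}tD(-u/t,t)$, and to checking that they are finite, which is precisely the assertion $u\in Z$.

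Next I would invoke the analysis that precedes the theorem. Fix $u\neq 0$ satisfying (\ref{E:condd1}) (or (\ref{E:condd2}) when $\rho=0$). For $|t|$ small the quantity $\hat S(u,t)=d(-u/t)\tfrac{t}{2}$ is purely imaginary, $\hat S(u,t)=iS(u,t)$ with $S$ real and given by (\ref{E:ut}); substituting $\cosh(iS)=\cos S$ and $\sinh(iS)=i\sin S$ turns $tC(-u/t,t)$ and $tD(-u/t,t)$ into the real expressions (\ref{E:kems3}) and (\ref{E:nos1}). Condition (\ref{E:condd1}) is chosen precisely so that $\widehat S(u)>0$, hence $\widetilde S(u)=\sigma|u|\widehat S(u)>0$; therefore, as $t\to0$, $S(u,t)\to\tfrac12|u|\sigma\sqrt{1-\rho^2}$, the denominator in (\ref{E:nos1}) converges to $\widetilde S(u)\neq 0$, and the argument of the logarithm in (\ref{E:kems3}) converges to a finite positive limit. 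This shows simultaneously that the right-hand sides of (\ref{E:kem33}) and (\ref{E:no1}) are finite for all sufficiently small $t$ --- i.e.\ $u\in Z$ --- and, after an elementary simplification, yields the limits (\ref{E:li1}) and (\ref{E:li2}): namely $tC(-u/t,t)\to 0$ (the logarithm stays bounded while it is multiplied by $t$) and
$$
tD\!\left(-\tfrac{u}{t},t\right)\longrightarrow\frac{u\sin\tfrac{u\sigma\sqrt{1-\rho^2}}{2}}{\sigma\left(\sqrt{1-\rho^2}\cos\tfrac{u\sigma\sqrt{1-\rho^2}}{2}+\rho\sin\tfrac{u\sigma\sqrt{1-\rho^2}}{2}\right)}.
$$

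Substituting these two limits into the displayed expression for $\Lambda^{(0)}(u)$ gives exactly (\ref{E:fo1}) for every admissible $u\neq0$. The value $u=0$ (which lies in the range of (\ref{E:condd1}), since $\widehat S(0)>0$) is handled directly: $\Lambda(0,t)=t\log 1=0$ for all $t$, and the right-hand side of (\ref{E:fo1}) also vanishes at $u=0$. For $\rho=0$ the same argument applies verbatim, now under (\ref{E:condd2}) and with $\sqrt{1-\rho^2}=1$; the ratio $\sin\tfrac{u\sigma}{2}/\cos\tfrac{u\sigma}{2}=\tan\tfrac{u\sigma}{2}$ produces the stated closed form. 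The only genuinely delicate point is the passage to the limit inside (\ref{E:nos1}) and (\ref{E:kems3}): one must verify that as $t\to0$ no branch of the principal square root defining $d$, nor of the logarithm in $C$, is crossed, so that plugging $S(u,t)\to\tfrac12|u|\sigma\sqrt{1-\rho^2}$ into the continuous functions $\sin$, $\cos$ and $\log$ is legitimate. Because (\ref{E:condd1}) keeps $\widehat S$ --- and hence the relevant denominator and the argument of the logarithm --- bounded away from zero in a full neighbourhood of $t=0$, every function involved is continuous at its limiting argument and the interchange of limit with these elementary functions is justified; there is no obstacle beyond this bookkeeping, which is exactly why the computational weight has been front-loaded into the displays preceding the theorem.
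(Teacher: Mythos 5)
Your argument is exactly the paper's: the proof is stated there in one line as ``Theorem \ref{T:te1} follows from (\ref{E:endeq}), (\ref{E:li1}), and (\ref{E:li2})'', and the supporting analysis you reproduce --- rewriting $\hat S = iS$, using (\ref{E:condd1}) to keep the denominator and logarithm argument bounded away from zero, and passing to the limit --- is precisely the computation carried out in the displays preceding the theorem. Correct and same route.
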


Theorem \ref{T:te1} follows from (\ref{E:endeq}), (\ref{E:li1}), and (\ref{E:li2}).

Recall that for $x\in\mathbb{R}$, the critical point $u^{*}(x)$ is the solution of the equation $\partial_u\Lambda^{(0)}(u)=-x$. Put $\theta=\frac{\sigma\sqrt{1-\rho^2}}{2}$. Then, using (\ref{E:fo1}), we obtain
$$
\partial_u\Lambda^{(0)}(u)=\frac{v_0}{2\sigma}\frac{\rho[1-\cos(2\theta u)]+\sqrt{1-\rho^2}
\sin(2\theta u)+\sigma(1-\rho^2)u}{(\sqrt{1-\rho^2}\cos(\theta u)+\rho\sin(\theta u))^2}-x_0.
$$
In a special case where $\rho=0$, we have
$$
\partial_u\Lambda^{(0)}(u)=\frac{v_0}{\sigma}\frac{\sin(2\theta u)+\sigma u}{1+\cos(2\theta u)}-x_0.
$$

In the following two statements, we provide formulas for the critical point $u^{*}(x)$ and the second derivative of the
function $\Lambda^{(0)}(u)$. These results can be used in the asymptotic formulas established in the previous sections in the case of the Heston model.
\begin{lem}\label{L:cri}
Suppose $\rho\neq 0$ and condition (\ref{E:condd1}) holds. Then, for every $x\in\mathbb{R}$, the critical point 
$u^{*}(x)$ is the unique solution to the equation
$$
\frac{\rho[1-\cos(2\theta u)]+\sqrt{1-\rho^2}
\sin(2\theta u)+\sigma(1-\rho^2)u}{(\sqrt{1-\rho^2}\cos(\theta u)+\rho\sin(\theta u))^2}=\frac{2\sigma}{v_0}(x_0-x).
$$
If $\rho=0$ and condition (\ref{E:condd2}) holds, then for every $x\in\mathbb{R}$,  
$u^{*}(x)$ is the unique solution to the equation
$$
\frac{\sin(2\theta u)+\sigma u}{1+\cos(2\theta u)}=\frac{\sigma}{v_0}(x_0-x).
$$
\end{lem}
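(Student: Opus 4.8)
The plan is to read off the critical-point equation directly from the formula for $\partial_u\Lambda^{(0)}(u)$ displayed just before the lemma, and then verify uniqueness using the structural properties of $\Lambda^{(0)}$ already established in Section~\ref{S:Laplace}. First I would recall from Theorem~\ref{T:te1} and the computation following it that, under condition~(\ref{E:condd1}), the function $\Lambda^{(0)}$ is given by (\ref{E:fo1}) and is real-analytic on the interval determined by (\ref{E:condd1}), with
$$
\partial_u\Lambda^{(0)}(u)=\frac{v_0}{2\sigma}\,\frac{\rho[1-\cos(2\theta u)]+\sqrt{1-\rho^2}\sin(2\theta u)+\sigma(1-\rho^2)u}{(\sqrt{1-\rho^2}\cos(\theta u)+\rho\sin(\theta u))^2}-x_0,
$$
where $\theta=\frac{\sigma\sqrt{1-\rho^2}}{2}$. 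The definition of $u^{*}(x)$ via (\ref{E:obr}) is that $\partial_u\Lambda^{(0)}(u^{*}(x))=-x$; substituting the expression above and moving $-x_0$ and the constant $v_0/(2\sigma)$ to the other side gives precisely the stated equation
$$
\frac{\rho[1-\cos(2\theta u)]+\sqrt{1-\rho^2}\sin(2\theta u)+\sigma(1-\rho^2)u}{(\sqrt{1-\rho^2}\cos(\theta u)+\rho\sin(\theta u))^2}=\frac{2\sigma}{v_0}(x_0-x),
$$
and the case $\rho=0$ follows the same way from the second displayed formula for $\partial_u\Lambda^{(0)}$.

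For the uniqueness claim, I would invoke the standing hypotheses from the G\"{a}rtner--Ellis conditions listed after (\ref{E:ge3}): $\partial_u\Lambda^{(0)}$ is strictly increasing on $I$ with range all of $\RR$. This is exactly what guarantees that the equation $\partial_u\Lambda^{(0)}(u)=-x$ has a unique solution for every $x\in\RR$, which is the content of the discussion around (\ref{E:obr}). Thus it suffices to observe that the Heston $\Lambda^{(0)}$ of (\ref{E:fo1}) (respectively its $\rho=0$ analogue) does satisfy these conditions on the open interval specified by (\ref{E:condd1}) (respectively (\ref{E:condd2})); that interval is then identified with $I$. The strict convexity / strict monotonicity can be checked by differentiating once more, but since the paper has been assuming the G\"{a}rtner--Ellis conditions hold throughout, I would simply cite them rather than reprove convexity here.

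The one point requiring a little care, and the main (minor) obstacle, is making sure that the interval appearing in (\ref{E:condd1})/(\ref{E:condd2}) really is the maximal interval $I$ on which $\Lambda^{(0)}$ is finite, so that ``for every $x\in\RR$'' is justified: one needs that the denominator $\sqrt{1-\rho^2}\cos(\theta u)+\rho\sin(\theta u)$ stays strictly positive on this interval (which was already noted after (\ref{E:condd1}), since $\widehat{S}(u)>0$ there and $\widehat{S}(0)>0$), that the numerator behaves so that $\partial_u\Lambda^{(0)}$ sweeps out all of $\RR$ as $u$ ranges over the interval (the denominator vanishes and the expression blows up to $\pm\infty$ at the endpoints), and that $\partial_u\Lambda^{(0)}$ is genuinely monotone in between. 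Once these are in place, strict monotonicity plus the intermediate value theorem deliver existence and uniqueness of $u^{*}(x)$ solving the displayed equation, completing the proof.
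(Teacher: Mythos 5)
Your proof is correct and takes the approach the paper implicitly has in mind: the paper explicitly omits the proof, stating ``Lemmas \ref{L:cri} and \ref{L:cria} are straightforward, and their proofs are omitted,'' and what you do---substitute the displayed formula for $\partial_u\Lambda^{(0)}$ into the critical-point equation $\partial_u\Lambda^{(0)}(u^{*}(x))=-x$ from (\ref{E:obr}), rearrange, and cite the standing G\"artner--Ellis hypotheses (strict monotonicity of $\partial_u\Lambda^{(0)}$ with range $\RR$) for uniqueness---is exactly that straightforward argument. Your closing remark about checking that the interval in (\ref{E:condd1})/(\ref{E:condd2}) is indeed the maximal interval $I$ and that $\partial_u\Lambda^{(0)}$ blows up to $\pm\infty$ at its endpoints is a reasonable bit of extra care, consistent with the paper's treatment of this interval after (\ref{E:condd1}).
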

\begin{lem}\label{L:cria}
Suppose $\rho\neq 0$ and condition (\ref{E:condd1}) holds. Then
$$
\partial^2\Lambda^{(0)}(u) 
=\frac{v_0S(u)}{2\sigma[\sqrt{1-\rho^2}\cos(\theta u)+\rho\sin(\theta u)]^3}
$$
where
\begin{align*}
S(u)&=(2\theta+\sigma\sqrt{1-\rho^2})
[\rho\sqrt{1-\rho^2}\sin(\theta u)+(1-\rho^2)\cos(\theta u)] \\
&\quad+2\sigma\theta(1-\rho^2)u[\sqrt{1-\rho^2}\sin(\theta u)-\rho\cos(\theta u)].
\end{align*}
If $\rho=0$ and condition (\ref{E:condd2}) holds, then 
$$
\partial^2\Lambda^{(0)}(u)=\frac{v_0}{2\sigma}
\frac{(2\theta+\sigma)\cos(\theta u)+2\theta\sigma u\sin(\theta u)}{\cos^3(\theta u)}.
$$
\end{lem}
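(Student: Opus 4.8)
The plan is to obtain $\partial^2\Lambda^{(0)}$ by differentiating (\ref{E:fo1}) twice with respect to $u$. First I would rewrite (\ref{E:fo1}), for $u$ satisfying (\ref{E:condd1}), in the compact form
$$
\Lambda^{(0)}(u)=\frac{v_0}{\sigma}\,\frac{u\sin(\theta u)}{M(u)}-x_0u,\qquad M(u):=\sqrt{1-\rho^2}\cos(\theta u)+\rho\sin(\theta u),
$$
with $\theta=\tfrac12\sigma\sqrt{1-\rho^2}$ as above. Two features make this manageable: the affine term $-x_0u$ is killed by the second derivative, so $\partial^2\Lambda^{(0)}=\tfrac{v_0}{\sigma}\,g''$ where $g(u):=u\sin(\theta u)/M(u)$; and $M$ satisfies the elementary identity $M''=-\theta^2M$.

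Next I would apply the quotient rule twice to $g=P/M$ with $P(u):=u\sin(\theta u)$. Eliminating the $M''$-term by means of $M''=-\theta^2M$, and using the cancellation $P''+\theta^2P=2\theta\cos(\theta u)$, one lands on
$$
g''(u)=\frac{2\,\bigl[\theta\cos(\theta u)\,M(u)^2-M'(u)P'(u)M(u)+P(u)M'(u)^2\bigr]}{M(u)^3}.
$$
It then remains to substitute $M'(u)=\theta\bigl(\rho\cos(\theta u)-\sqrt{1-\rho^2}\sin(\theta u)\bigr)$ and $P'(u)=\sin(\theta u)+\theta u\cos(\theta u)$, expand, and group by powers of $u$ (only degrees $0$ and $1$ occur). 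After repeated use of $\cos^2(\theta u)+\sin^2(\theta u)=1$, the degree-$0$ part collapses to $\theta\sqrt{1-\rho^2}\,M(u)$ and the degree-$1$ part to $\theta^2u\sqrt{1-\rho^2}\bigl(\sqrt{1-\rho^2}\sin(\theta u)-\rho\cos(\theta u)\bigr)$. Multiplying through by $\tfrac{v_0}{\sigma}$ and rewriting the prefactors via the relations $\sigma\sqrt{1-\rho^2}=2\theta$ and $\sigma(1-\rho^2)=2\theta\sqrt{1-\rho^2}$ then reproduces the asserted formula with $S(u)$ exactly as stated. For $\rho=0$ one has simply $M(u)=\cos(\theta u)$, hence $g(u)=u\tan(\theta u)$ and $g''(u)$ is immediate; it coincides with the $\rho=0$ substitution in the general formula (using $\sigma=2\theta$), which settles that case.

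The substantive part is the algebraic bookkeeping in the last step: the products $M'P'M$ and $P(M')^2$ together generate on the order of a dozen monomials in $\cos(\theta u)$, $\sin(\theta u)$, $u$, $\rho$ and $\sqrt{1-\rho^2}$, and they must be collected so that the Pythagorean identity applies in each of the two $u$-blocks. I expect this to be the only real obstacle, and it is routine; organizing the expansion strictly by the two powers of $u$ keeps it under control. A convenient consistency check en route is that a single differentiation of $g$ must reproduce the (unlabelled) formula for $\partial_u\Lambda^{(0)}(u)$ recorded earlier in this section.
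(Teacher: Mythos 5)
Your proposal is correct, and since the paper declares the proof of Lemma~\ref{L:cria} ``straightforward'' and omits it, the direct differentiation of (\ref{E:fo1}) that you carry out is precisely the intended argument. Your organizing devices --- the second-order identities $M''=-\theta^2M$ and $P''+\theta^2P=2\theta\cos(\theta u)$ that kill the highest derivatives before expanding, and the resulting closed form $g''=2\bigl[\theta\cos(\theta u)M^2-M'P'M+P(M')^2\bigr]/M^3$ --- are a clean way to keep the bookkeeping in check, and I verified that the degree-$0$ block reduces to $\theta\sqrt{1-\rho^2}\,M$ and the degree-$1$ block to $\theta^2u\sqrt{1-\rho^2}\bigl(\sqrt{1-\rho^2}\sin(\theta u)-\rho\cos(\theta u)\bigr)$ via the Pythagorean identity, which recombine (using $\sigma\sqrt{1-\rho^2}=2\theta$, $\sigma(1-\rho^2)=2\theta\sqrt{1-\rho^2}$) into the stated $S(u)$; the $\rho=0$ case via $g=u\tan(\theta u)$ and the consistency check against the displayed formula for $\partial_u\Lambda^{(0)}$ both check out as well.
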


Lemmas \ref{L:cri} and \ref{L:cria} are straightforward, and their proofs are omitted.

We will next compute the functions $\Lambda^{(1)}$ and $\Lambda^{(2)}$.
Recall that 
$$
\int \exp \big( -\frac{u}{t} z \big) p_t(dz) = \exp\big( \frac{\Lambda^{(0)}(u)}{t} \big) \exp \big( \Lambda^{(1)}(u) \big) \bigg( 1 + t\Lambda^{(2)}(u) + \ldots \bigg).
$$
Therefore,
$$
\Lambda(u,t)=\Lambda^{(0)}(u)+t\Lambda^{(1)}(u)+t\log(1 + t\Lambda^{(2)}(u) + \ldots ).
$$
By differentiating the previous formula with respect to $t$, we obtain
$$
\Lambda^{(1)}(u)=\lim_{t\rightarrow 0}\frac{\partial\Lambda}{\partial t}(u,t)
$$
and
\begin{equation}
\Lambda^{(2)}(u)=\frac{1}{2}\lim_{t\rightarrow 0}\frac{\partial^2\Lambda}{\partial t^2}(u,t).
\label{E:par2}
\end{equation}

Let us fix $u\neq 0$ such as in Theorem \ref{T:te1}. Then the function $t\mapsto S(u,t)$ defined by (\ref{E:ut})
is real analytic in $t$ in a small neighborhood of $t=0$, depending on $u$. Using the Taylor formula, we obtain
\begin{equation}
S(u,t)=c_0(u)+c_1(u)t+\frac{1}{2}c_2(t)t^2+O\left(t^3\right)
\label{E:son1}
\end{equation}
as $t\rightarrow 0$, where the $O$-estimate depends on $u$, and the coefficients are given by
\begin{equation}
c_0(u)=\frac{|u|\sigma}{2}\sqrt{1-\rho^2},
\label{E:son2}
\end{equation}
\begin{equation}
c_1(u)=-\frac{|u|}{u}\frac{k\sigma+b\rho}{2\sqrt{1-\rho^2}},
\label{E:son3}
\end{equation}
and
$$
c_2(u)=-\frac{|u|}{u^2}\frac{b^2(1-\rho^2)
+(k\sigma+b\rho)^2}{2\sigma(1-\rho^2)^{\frac{3}{2}}}.
$$

Our next goal is to expand the functions $t\mapsto\sin S(u,t)$ 
and $t\mapsto\cos S(u,t)$. Using the Taylor formula and (\ref{E:son1}), we get
\begin{equation}
\sin S(u,t)=U_0(u)+U_1(u)t+\frac{1}{2}U_2(u)t^2+O(t^3)
\label{E:sono1}
\end{equation}
as $t\rightarrow 0$, where 
\begin{equation}
U_0(u)=\sin c_0(u),
\label{E:U0}
\end{equation}
\begin{equation}
U_1(u)=c_1(u)\cos c_0(u),
\label{E:U1}
\end{equation}
and
$$
U_2(u)=c_2(u)\cos c_0(u)-c_1(u)^2\sin c_0(u).
$$
Similarly,
\begin{equation}
\cos S(u,t)=W_0(u)+W_1(u)t+\frac{1}{2}W_2(u)t^2+O(t^3)
\label{E:sono2}
\end{equation}
as $t\rightarrow 0$, where 
$$
W_0(u)=\cos c_0(u),
$$
$$
W_1(u)=-c_1(u)\sin c_0(u),
$$
and
$$
W_2(u)=-[c_2(u)\sin c_0(u)+c_1(u)^2\cos c_0(u)].
$$

We will next expand the functions $t\mapsto tD(-\frac{u}{t},t)$ and $t\mapsto tC(-\frac{u}{t},t)$. It follows from
(\ref{E:son1}), (\ref{E:sono1}), and (\ref{E:sono2}) that
\begin{equation}
2S(u,t)\cos S(u,t)+(bt+\rho\sigma u)\sin S(u,t)=V_0(u)+V_1(u)t+\frac{1}{2}V_2(u)t^2+O(t^3)
\label{E:sono3}
\end{equation}
as $t\rightarrow 0$, where
$$
V_0(u)=2c_0(u)W_0(u)+\rho\sigma uU_0(u),
$$
$$
V_1(u)=2c_0(u)W_1(u)+2c_1(u)W_0(u)+bU_0(u)+\rho\sigma uU_1(u),
$$
and
$$
V_2(u)=2c_0(u)W_2(u)+4c_1(u)W_1(u)+2c_2(u)W_0(u)+2bU_1(u)+\rho\sigma uU_2(u).
$$
It is not hard to see that
\begin{equation}
V_0(u)=2c_0(u)\cos c_0(u)+\rho\sigma u\sin c_0(u),
\label{E:V0}
\end{equation}
\begin{align}
V_1(u)&=(2+\rho\sigma u)c_1(u)\cos c_0(u) \nonumber \\
&\quad+(b-2c_0(u)c_1(u))\sin c_0(u),
\label{E:V1}
\end{align}
and
\begin{align*}
V_2(u)&=[2c_2(u)+2bc_1(u)+\rho\sigma uc_2(u)-2c_0(u)c_1(u)^2]\cos c_0(u) \\
&\quad-[2c_0(u)c_2(u)+4c_1(u)^2+\rho\sigma uc_1(u)^2]\sin c_0(u).
\end{align*}

Therefore,
\begin{equation}
tD(-\frac{u}{t},t)=(u^2-2tku)\frac{U_0(u)+U_1(u)t+\frac{1}{2}U_2(u)t^2+O(t^3)}
{V_0(u)+V_1(u)t+\frac{1}{2}V_2(u)t^2+O(t^3)}
\label{E:sono4}
\end{equation}
as $t\rightarrow 0$ (see (\ref{E:nos1}), (\ref{E:sono1}), and (\ref{E:sono3})).

Set
\begin{equation}
tD(-\frac{u}{t},t)=T_0(u)+T_1(u)t+\frac{1}{2}T_2(u)t^2+O(t^3)
\label{E:sono5}
\end{equation}
as $t\rightarrow 0$. Then, (\ref{E:sono4}) and (\ref{E:sono5}) give
$$
T_0(u)V_0(u)=u^2U_0(u),
$$
$$
T_0(u)V_1(u)+T_1(u)V_0(u)=u^2U_1(u)-2kuU_0(u),
$$
and
$$
\frac{1}{2}T_0(u)V_2(u)+T_1(u)V_1(u)+\frac{1}{2}T_2(u)V_0(u)
=\frac{1}{2}u^2U_2(u)-2kuU_1(u).
$$
It follows from the previous equalities that
$$
T_0(u)=\frac{u^2U_0(u)}{V_0(u)},
$$
$$
T_1(u)=\frac{u^2U_1(u)V_0(u)-2kuU_0(u)V_0(u)-u^2U_0(u)V_1(u)}
{V_0(u)^2},
$$
and
$$
T_2(u)=\frac{Q(u)}{V_0(u)^3},
$$
where
\begin{align}
Q(u)&=u^2U_2(u)V_0(u)^2-4kuU_1(u)V_0(u)^2-u^2U_0(u)V_0(u)V_2(u) \nonumber \\
&\quad-2u^2U_1(u)V_0(u)V_1(u)+4kuU_0(u)V_0(u)V_1(u)+2u^2U_0(u)V_1(u)^2.
\label{E:Q}
\end{align}
Therefore, 
the following asymptotic formula:
\begin{align}
tD(-\frac{u}{t},t)&=\frac{u^2U_0(u)}{V_0(u)}+t\frac{u^2U_1(u)V_0(u)-2kuU_0(u)V_0(u)-u^2U_0(u)V_1(u)}
{V_0(u)^2} \nonumber \\
&+\frac{t^2}{2}\frac{Q(u)}{V_0(u)^3}+O(t^3)
\label{E:asfa}
\end{align}
as $t\rightarrow 0$.

Now, we turn our attention to the function $t\mapsto tC(-\frac{u}{t},t)$. Using (\ref{E:kems3}), we
see that
\begin{align}
tC(-\frac{u}{t},t)=-tru+t\frac{a}{\sigma^2}\Bigg[bt+\rho\sigma u
-2 \log\frac{V_0(u)+V_1(u)t+O(t^2)}{2c_0(u)+2c_1(u)t+O\left(t^2\right)}\Bigg].
\label{E:co1}
\end{align}
Set
$$
\frac{V_0(u)+V_1(u)t+O(t^2)}{2c_0(u)+2c_1(u)t+O\left(t^2\right)}
=L_0(u)+L_1(u)t+O(t^2)
$$
as $t\rightarrow 0$. It is not hard to see that
\begin{equation}
L_0(u)=\frac{V_0(u)}{2c_0(u)},
\label{E:lis1}
\end{equation}
\begin{equation}
L_1(u)=\frac{c_0(u)V_1(u)-c_1(u)V_0(u)}{2c_0(u)^2}.
\label{E:lis2}
\end{equation}
We also have
\begin{equation}
\log [L_0(u)+L_1(u)t+O(t^2)]=\log L_0(u)+\frac{L_1(u)}{L_0(u)}t+O(t^2)
\label{E:li3}
\end{equation}
as $t\rightarrow 0$. It follows from (\ref{E:co1})-(\ref{E:li3}) that
\begin{align}
tC(-\frac{u}{t},t)&=\left[\frac{a\rho u}{\sigma}-ru-\frac{2a}{\sigma^2}\log\frac{V_0(u)}{2c_0(u)}\right]t
\nonumber \\
&\quad+\left[\frac{ab}{\sigma^2}-\frac{2a}{\sigma^2}\frac{c_0(u)V_1(u)-c_1(u)V_0(u)}{c_0(u)V_0(u)}\right]t^2+O(t^3)
\label{E:li4}
\end{align}
as $t\rightarrow 0$.

Next, we will find explicit expressions for the functions $\Lambda^{(1)}$
and $\Lambda^{(2)}$. Suppose $\rho\neq 0$, $u\neq 0$, and condition (\ref{E:condd1}) holds.
Then
\begin{align}
&\Lambda^{(1)}(u)=\left(\frac{a\rho}{\sigma}-r\right)u-\frac{2a}{\sigma^2}\log\frac{V_0(u)}{2c_0(u)}
\nonumber \\
&\quad+v_0\frac{u^2U_1(u)V_0(u)-2kuU_0(u)V_0(u)-u^2U_0(u)V_1(u)}{V_0(u)^2}.
\label{E:lala}
\end{align}
Formula (\ref{E:lala}) can be established, using (\ref{E:asfa}) and (\ref{E:li4}).

The next statement provides an explicit expression for the function $\Lambda^{(1)}$ in terms of the Heston model parameters.
\begin{thm}\label{T:coco1}
Suppose $\rho\neq 0$, $u\neq 0$, and condition (\ref{E:condd1}) holds. Then
\begin{align}
&\Lambda^{(1)}(u)=\left(\frac{a\rho}{\sigma}-r\right)u-\frac{2a}{\sigma^2}
\log\frac{\sqrt{1-\rho^2}\cos\frac{u\sigma\sqrt{1-\rho^2}}{2}
+\rho\sin\frac{u\sigma\sqrt{1-\rho^2}}{2}}{\sqrt{1-\rho^2}} 
\nonumber \\
&+v_0\frac{E_1(u)\cos^2\frac{u\sigma\sqrt{1-\rho^2}}{2}+E_2(u)\cos\frac{u\sigma\sqrt{1-\rho^2}}{2}\sin\frac{u\sigma\sqrt{1-\rho^2}}{2}+E_3(u)\sin^2\frac{u\sigma\sqrt{1-\rho^2}}{2}}
{\sigma^2\left(\sqrt{1-\rho^2}\cos\frac{u\sigma\sqrt{1-\rho^2}}{2}
+\rho\sin\frac{u\sigma\sqrt{1-\rho^2}}{2}\right)^2},
\label{E:ll}
\end{align}
where
$$
E_1(u)=-\frac{u\sigma(k\sigma+b\rho)}{2},
$$
$$
E_2(u)=-2k\sigma\sqrt{1-\rho^2}
+\frac{k\sigma+b\rho}{\sqrt{1-\rho^2}},
$$
and
$$
E_3(u)=-\left(2k\rho\sigma+b+u\sigma\frac{k\sigma+b\rho}{2}\right).
$$
If $\rho=0$, then formula (\ref{E:ll}) holds for all $u$ satisfying condition (\ref{E:condd2}).
\end{thm}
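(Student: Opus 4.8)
The plan is to start from the formula (\ref{E:lala}), which has already been derived from (\ref{E:asfa}) and (\ref{E:li4}), and to obtain (\ref{E:ll}) by simply substituting the closed-form expressions for the auxiliary quantities and collecting terms. Throughout I write $\theta=\frac{\sigma\sqrt{1-\rho^2}}{2}$, $C=\cos c_0(u)$, $S=\sin c_0(u)$, and I record the elementary identities $c_0(u)=\abs{u}\theta$, $C=\cos(\theta u)$, $S=\frac{\abs{u}}{u}\sin(\theta u)$ (cosine is even, sine is odd), which will be used to eliminate the auxiliary absolute values. First I would handle the logarithmic term: dividing (\ref{E:V0}) by $2c_0(u)=\abs{u}\sigma\sqrt{1-\rho^2}$ and inserting $S=\frac{\abs{u}}{u}\sin(\theta u)$, the factors $\abs{u}/u$ cancel and
\[
\frac{V_0(u)}{2c_0(u)}=\cos(\theta u)+\frac{\rho}{\sqrt{1-\rho^2}}\sin(\theta u)=\frac{\sqrt{1-\rho^2}\cos(\theta u)+\rho\sin(\theta u)}{\sqrt{1-\rho^2}},
\]
which is exactly the argument of the logarithm in (\ref{E:ll}).

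For the last term of (\ref{E:lala}) I would first note that the same computation gives $V_0(u)=\abs{u}\sigma\big(\sqrt{1-\rho^2}\cos(\theta u)+\rho\sin(\theta u)\big)$, hence $V_0(u)^2=u^2\sigma^2\big(\sqrt{1-\rho^2}\cos(\theta u)+\rho\sin(\theta u)\big)^2$, which matches the denominator in (\ref{E:ll}) up to the factor $u^2$. It therefore remains to show that the numerator
\[
N(u):=u^2U_1(u)V_0(u)-2kuU_0(u)V_0(u)-u^2U_0(u)V_1(u)
\]
equals $u^2\big(E_1(u)\cos^2(\theta u)+E_2(u)\cos(\theta u)\sin(\theta u)+E_3(u)\sin^2(\theta u)\big)$.

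Plugging (\ref{E:U0}), (\ref{E:U1}), (\ref{E:V0}), (\ref{E:V1}) into $N(u)$ and expanding in powers of $C$ and $S$, the two cubic cross-terms proportional to $\rho\sigma c_1(u)u^3CS$ — one coming from $u^2U_1(u)V_0(u)$ and one from $-u^2U_0(u)V_1(u)$ — cancel, and $N(u)$ collapses to
\begin{align*}
N(u)&=2c_0(u)c_1(u)u^2\,C^2-2u\big(2kc_0(u)+c_1(u)u\big)\,CS\\
&\quad+u^2\big(2c_0(u)c_1(u)-b-2k\rho\sigma\big)\,S^2.
\end{align*}
Now I would substitute (\ref{E:son2}) and (\ref{E:son3}): using $\theta/\sqrt{1-\rho^2}=\sigma/2$ one gets $c_0(u)c_1(u)=-\frac{u\sigma(k\sigma+b\rho)}{4}$ and $2kc_0(u)+c_1(u)u=\abs{u}\big(k\sigma\sqrt{1-\rho^2}-\frac{k\sigma+b\rho}{2\sqrt{1-\rho^2}}\big)$, and one rewrites $C^2=\cos^2(\theta u)$, $CS=\frac{\abs{u}}{u}\cos(\theta u)\sin(\theta u)$, $S^2=\sin^2(\theta u)$. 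All the surviving occurrences of $\abs{u}$ then combine into even powers, and comparing with the definitions of $E_1,E_2,E_3$ in the statement shows that the $C^2$-, $CS$- and $S^2$-coefficients of $N(u)$ are precisely $u^2E_1(u)$, $u^2E_2(u)$ and $u^2E_3(u)$. Dividing $N(u)$ by $V_0(u)^2$ yields the last term of (\ref{E:ll}), which finishes the case $\rho\neq0$. For $\rho=0$ I would observe that the whole chain (\ref{E:son1})--(\ref{E:lala}) is valid verbatim under condition (\ref{E:condd2}) — the only analytic input was the real analyticity of $t\mapsto S(u,t)$ near $t=0$, which holds because $S(u,t)^2\to\frac{u^2\sigma^2}{4}>0$ as $t\to0$ — so the same substitution gives (\ref{E:ll}) with $\rho$ set to $0$.

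No step is conceptually hard; the only genuine obstacle is the bookkeeping inside the expansion of $N(u)$, where one must check both that the cubic cross-terms proportional to $\rho\sigma c_1(u)u^3$ cancel and that every surviving factor of $\abs{u}$ (arising from $c_1(u)$ and from $S=\sin c_0(u)$) pairs off into an even power, so that the final expression is an unambiguous function of the real variable $u$.
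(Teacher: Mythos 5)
Your proof is correct and follows the paper's own route: both start from formula (\ref{E:lala}), substitute the closed forms of $c_0,c_1,U_0,U_1,V_0,V_1$ from (\ref{E:son2})--(\ref{E:V1}), note the cancellation of the $\rho\sigma u\,c_1(u)$ cross-terms, and finally eliminate the absolute values by observing that every surviving $|u|$ pairs into an even power (the paper compresses this into ``replacing $|u|$ by $u$ \dots it is not hard to see that this can be done''). Your version simply makes the intermediate expression for $N(u)$ and the $|u|\to u$ bookkeeping explicit, which the paper leaves to the reader.
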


\it Proof. \rm Taking into account (\ref{E:lala}), (\ref{E:son2}), (\ref{E:son3}), (\ref{E:U0}),
(\ref{E:U1}), (\ref{E:V0}), (\ref{E:V1}), we obtain
\begin{align*}
&\Lambda^{(1)}(u)=\left(\frac{a\rho}{\sigma}-r\right)u-\frac{2a}{\sigma^2}
\log\frac{|u|\sigma\sqrt{1-\rho^2}\cos\frac{|u|\sigma\sqrt{1-\rho^2}}{2}
+\rho\sigma u\sin\frac{|u|\sigma\sqrt{1-\rho^2}}{2}}{|u|\sigma\sqrt{1-\rho^2}} 
\\
&+v_0\frac{\widetilde{E}_1(u)\cos^2\frac{|u|\sigma\sqrt{1-\rho^2}}{2}+\widetilde{E}_2(u)
\cos\frac{|u|\sigma\sqrt{1-\rho^2}}{2}\sin\frac{|u|\sigma\sqrt{1-\rho^2}}{2}+\widetilde{E}_3(u)\sin^2\frac{|u|\sigma\sqrt{1-\rho^2}}{2}}
{\left(|u|\sigma\sqrt{1-\rho^2}\cos\frac{|u|\sigma\sqrt{1-\rho^2}}{2}
+\rho\sigma u\sin\frac{|u|\sigma\sqrt{1-\rho^2}}{2}\right)^2},
\end{align*}
where
$$
\widetilde{E}_1(u)=-\frac{u^3}{2}\sigma(k\sigma+b\rho),
$$
$$
\widetilde{E}_2(u)=u\left[-\rho\sigma u|u|\frac{k\sigma+b\rho}{2\sqrt{1-\rho^2}}
-2k|u|\sigma\sqrt{1-\rho^2}+(2+\rho\sigma u)|u|\frac{k\sigma+b\rho}{2\sqrt{1-\rho^2}}\right],
$$
and
$$
\widetilde{E}_3(u)=-u^2\left[2k\rho\sigma+b+u\sigma\frac{k\sigma+b\rho}{2}\right].
$$
Next, replacing $|u|$ by $u$ in the previous formulas (it is not hard to see that this can be done)
and making several cancellations, we obtain formula (\ref{E:ll}).

This completes the proof of Theorem \ref{T:coco1}.

Our final goal in the present section is to find an explicit formula for the function $\Lambda^{(2)}$
in terms of the Heston model parameters. It follows from (\ref{E:endeq}), (\ref{E:par2}),
(\ref{E:asfa}), and (\ref{E:li4}) that
\begin{equation}
\Lambda^{(2)}(u)=\frac{ab}{\sigma^2}-\frac{2a}{\sigma^2}\frac{c_0(u)V_1(u)-c_1(u)V_0(u)}{c_0(u)V_0(u)}
+\frac{v_0}{2}\frac{Q(u)}{V_0(u)^3},
\label{E:finni}
\end{equation}
where $Q(u)$ is given by (\ref{E:Q}). Now, it is clear how to obtain an explicit expression for the 
function $\Lambda^{(2)}$, expressed in terms of the Heston model parameters.
It suffices to transform the formula in (\ref{E:finni}), using the explicit expressions 
for the functions $c_i$, $U_i$, $V_i$ with $i=0,1,2$, and the function $Q$. Let us also note that the value of 
the function on the right-hand of
formula (\ref{E:finni}) does not change if we replace $|u|$ by $u$. Taking into account what was said above, and making long but straightforward computations, we see that the following statement holds.
\begin{thm}\label{T:fin}
Suppose $\rho\neq 0$, $u\neq 0$, and condition (\ref{E:condd1}) holds. Then
\begin{align}
&\Lambda^{(2)}(u)=\frac{ab}{\sigma^2}-\frac{a}{\sigma^3(1-\rho^2)u}\frac{I_0(u)}
{\sqrt{1-\rho^2}\cos\frac{u\sigma\sqrt{1-\rho^2}}{2}+\rho
\sin\frac{u\sigma\sqrt{1-\rho^2}}{2}} \nonumber \\
&+\frac{v_0}{2}\frac{I_1(u)}{\sigma^2\left[\sqrt{1-\rho^2}\cos\frac{u\sigma\sqrt{1-\rho^2}}
{2}+\rho\sin\frac{u\sigma\sqrt{1-\rho^2}}{2}\right]^2} \nonumber \\
&+\frac{v_0}{2}\frac{I_2(u)I_3(u)}{u\sigma^3\left[\sqrt{1-\rho^2}\cos\frac{u\sigma\sqrt{1-\rho^2}}
{2}+\rho\sin\frac{u\sigma\sqrt{1-\rho^2}}{2}\right]^3},
\label{E:finik1}
\end{align}
where
\begin{align*}
I_0(u)&=\left[-u\rho\sigma\sqrt{1-\rho^2}(k\sigma+b\rho)\right]\cos\frac{u\sigma\sqrt{1-\rho^2}}{2} \\
&\quad+\left[2b+2\rho k\sigma+u\sigma(k\sigma+b\rho)(1-\rho^2)\right]\sin\frac{u\sigma\sqrt{1-\rho^2}}{2},
\end{align*}
\begin{align*}
I_1(u)&=-\frac{u\left[b^2(1-\rho^2)+(k\sigma+b\rho)^2\right]}{2(1-\rho^2)}
+\frac{(k\sigma+b\rho)^2}{1-\rho^2}\sin^2\frac{u\sigma\sqrt{1-\rho^2}}{2} \\
&\quad+\left[\frac{b^2(1-\rho^2)+(k\sigma+b\rho)^2}{\sigma(1-\rho^2)^{\frac{3}{2}}}
+\frac{b(k\sigma+b\rho)}{\sqrt{1-\rho^2}}\right]\sin\frac{u\sigma\sqrt{1-\rho^2}}{2}
\cos\frac{u\sigma\sqrt{1-\rho^2}}{2},
\end{align*}
\begin{align*}
I_2(u)&=2\left[2k\sigma\sqrt{1-\rho^2}-\frac{(2+\rho\sigma u)(k\sigma+b\rho)}{2\sqrt{1-\rho^2}}
\right]\cos\frac{u\sigma\sqrt{1-\rho^2}}{2} \\
&\quad+2\left[2k\rho\sigma+b+\frac{u\sigma(k\sigma+b\rho)}{2}\right]\sin\frac{u\sigma\sqrt{1-\rho^2}}{2},
\end{align*}
and
\begin{align*}
I_3(u)&=-u\sigma\sqrt{1-\rho^2}+b\sin^2\frac{u\sigma\sqrt{1-\rho^2}}{2} \\
&\quad-\frac{k\sigma+b\rho}{\sqrt{1-\rho^2}}\sin\frac{u\sigma\sqrt{1-\rho^2}}{2}
\cos\frac{u\sigma\sqrt{1-\rho^2}}{2}.
\end{align*}

If $\rho=0$, then formula (\ref{E:finik1}) holds for all $u$ satisfying condition (\ref{E:condd2}).
\end{thm}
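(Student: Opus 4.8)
The statement is purely computational: all the ingredients are already assembled in formula (\ref{E:finni}), which (as noted there, via (\ref{E:endeq}), (\ref{E:par2}), (\ref{E:asfa}) and (\ref{E:li4})) writes $\Lambda^{(2)}(u)$ as the sum of the constant $\frac{ab}{\sigma^2}$, the term $-\frac{2a}{\sigma^2}\,\frac{c_0(u)V_1(u)-c_1(u)V_0(u)}{c_0(u)V_0(u)}$, and the term $\frac{v_0}{2}\,\frac{Q(u)}{V_0(u)^3}$ with $Q$ given by (\ref{E:Q}). The plan is to substitute into the last two summands the explicit forms of the auxiliary quantities --- $c_0,c_1$ from (\ref{E:son2}), (\ref{E:son3}) and $c_2$ from its displayed formula; $U_0,U_1$ from (\ref{E:U0}), (\ref{E:U1}) and $U_2$ from its displayed formula; $V_0,V_1$ from (\ref{E:V0}), (\ref{E:V1}) and $V_2$ from its displayed formula --- and then to simplify each summand separately, throughout abbreviating $s=\sin\frac{u\sigma\sqrt{1-\rho^2}}{2}$ and $c=\cos\frac{u\sigma\sqrt{1-\rho^2}}{2}$.

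I would deal with the second summand first. A short computation using the parities of $\sin$ and $\cos$ turns (\ref{E:V0}) into $V_0(u)=|u|\sigma\bigl(\sqrt{1-\rho^2}\,c+\rho\,s\bigr)$, so that $c_0(u)V_0(u)=\tfrac12 u^2\sigma^2\sqrt{1-\rho^2}\bigl(\sqrt{1-\rho^2}\,c+\rho\,s\bigr)$; substituting this together with $c_1(u)$ from (\ref{E:son3}) and $V_1(u)$ from (\ref{E:V1}) into the numerator $c_0(u)V_1(u)-c_1(u)V_0(u)$, expanding in $s$ and $c$, and cancelling the remaining powers of $u$ should produce exactly the term $-\frac{a}{\sigma^3(1-\rho^2)u}\,\frac{I_0(u)}{\sqrt{1-\rho^2}\,c+\rho\,s}$ of (\ref{E:finik1}). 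For the third summand I would first expand $Q(u)$ from (\ref{E:Q}); after inserting the $U_i$'s and $V_i$'s and the explicit $c_1,c_2$, this becomes a trigonometric polynomial in $s,c$ whose coefficients are Laurent polynomials in $u$. Since (\ref{E:Q}) is quadratic in $(V_0,V_1)$, one may write $Q=V_0^2\,(u^2U_2-4kuU_1)+V_0\,P_1+P_2$ and hence $\frac{Q}{V_0^3}=\frac{u^2U_2-4kuU_1}{V_0}+\frac{P_1}{V_0^2}+\frac{P_2}{V_0^3}$; using $V_0(u)=|u|\sigma(\sqrt{1-\rho^2}\,c+\rho\,s)$ and the linear identity $|u|\sigma\sqrt{1-\rho^2}\,c=V_0(u)-\rho\sigma u\,s$ to clear $\cos c_0$ from the lower-order pieces, and then reducing everything over the common denominator $(\sqrt{1-\rho^2}\,c+\rho\,s)^3$ with repeated use of $s^2+c^2=1$, the constant and first-power contributions should cancel and leave precisely the two remaining terms $\frac{v_0}{2}\,\frac{I_1(u)}{\sigma^2(\sqrt{1-\rho^2}\,c+\rho\,s)^2}$ and $\frac{v_0}{2}\,\frac{I_2(u)I_3(u)}{u\sigma^3(\sqrt{1-\rho^2}\,c+\rho\,s)^3}$ with $I_1,I_2,I_3$ as stated.

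Two bookkeeping points finish the argument. First, every occurrence of $|u|$ enters either through $c_0(u)$, which is a positive multiple of $|u|$, or through the factor $\operatorname{sgn}(u)$ carried by $c_1(u)$ and $c_2(u)$; tracking parities shows that each summand of (\ref{E:finni}) is an even function of $u$ once the cancellations above are carried out, so $|u|$ may be replaced by $u$ everywhere --- this is the remark already recorded in the text preceding the statement. Second, the case $\rho=0$: here $c_1$, $V_1$, and the denominator $\sqrt{1-\rho^2}\,c+\rho\,s$ all simplify, condition (\ref{E:condd2}) guarantees that this denominator is positive, and the same substitutions go through verbatim and yield (\ref{E:finik1}) with $\rho=0$.

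The obstacle is not conceptual but the bulk and fragility of the algebra: expanding $Q(u)$ and then splitting $Q(u)/V_0(u)^3$ so that the constant and first-power pieces cancel and the rest organizes into the stated $I_1$- and $I_2I_3$-shapes requires applying $s^2+c^2=1$ and the relation for $|u|\sigma\sqrt{1-\rho^2}\,c$ at exactly the right places, and a single sign error in $c_1$, $c_2$, or any of the $V_i$ contaminates everything downstream. A practical safeguard is to carry all intermediate quantities as polynomials in $s,c$ with coefficients in $\mathbb{Q}(u,\sigma,\rho,a,b,k,v_0)$, and to cross-check the final formula by matching a few low-order Taylor coefficients in $u$ of both sides against the expansion of $\Lambda(u,t)$ obtained from (\ref{E:endeq}).
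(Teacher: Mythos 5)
Your proof takes the same overall route as the paper: start from (\ref{E:finni}), substitute the explicit $c_i$, $U_i$, $V_i$, separate the two non-constant summands, and then replace $|u|$ by $u$ after observing evenness. Where you diverge is in how $Q(u)/V_0(u)^3$ is decomposed. You group by powers of $V_0$, writing $Q=V_0^2(u^2U_2-4kuU_1)+V_0 P_1+P_2$, which yields a $1/V_0$ piece, a $1/V_0^2$ piece, and a $1/V_0^3$ piece, and then you appeal to cancellations (``the constant and first-power contributions should cancel'') after clearing denominators. The paper instead rewrites $Q/V_0^3$ via the identity
\begin{align*}
\frac{Q(u)}{V_0(u)^3}=\frac{u^2\left[U_2(u)V_0(u)-U_0(u)V_2(u)\right]}{V_0(u)^2}
+\frac{\left[4kuV_0(u)+2u^2V_1(u)\right]\left[U_0(u)V_1(u)-U_1(u)V_0(u)\right]}{V_0(u)^3},
\end{align*}
so the $1/V_0$ piece never appears, and the two summands map directly onto the $I_1/(\ldots)^2$ term (with $I_1=U_2V_0-U_0V_2$) and the $I_2I_3/(\ldots)^3$ term (with $I_2=u^{-2}[4kuV_0+2u^2V_1]$ and $I_3=U_0V_1-U_1V_0$). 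That factored form is the structural observation that makes the product $I_2I_3$ emerge by construction rather than by post-hoc recombination. Your decomposition is algebraically legitimate and would lead to the same result, but the cancellation you defer to is exactly the content of the paper's identity, so you are pushing nontrivial algebra into an unverified ``should cancel''; in a write-up you would want to exhibit that identity explicitly. The rest --- the handling of the $c_0V_1-c_1V_0$ term, the $|u|\mapsto u$ parity argument, and the $\rho=0$ reduction under (\ref{E:condd2}) --- matches the paper.
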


\it Proof. \rm The second term on the right-hand side of (\ref{E:finik1}) can be obtained from the corresponding term in (\ref{E:finni}) by taking into account (\ref{E:son2}), (\ref{E:son3}), (\ref{E:V0}), and (\ref{E:V1}). Next, using (\ref{E:Q}), we see that
\begin{align}
\frac{Q(u)}{V_0(u)^3}&=\frac{u^2\left[U_2(u)V_0(u)-U_0(u)V_2(u)\right]}{V_0(u)^2} \nonumber \\
&\quad+\frac{\left[4kuV_0(u)+2u^2V_1(u)\right]\left[U_0(u)V_1(u)-U_1(u)V_0(u)\right]}
{V_0(u)^3}.
\label{E:frr}
\end{align}
Moreover
\begin{align*}
U_2(u)V_0(u)-U_0(u)V_2(u)&=2c_0(u)c_2(u)+4c_1(u)^2\sin^2c_0(u) \\
&\quad-2\left[c_2(u)+bc_1(u)\right]\sin c_0(u)\cos c_0(u),
\end{align*}
\begin{align*}
4kuV_0(u)+2u^2V_1(u)&=2u\left[4kc_0(u)+u(2+\rho\sigma u)c_1(u)\right]\cos c_0(u) \\
&\quad+2u^2\left[2k\rho\sigma+b-2c_0(u)c_1(u)\right]\sin c_0(u),
\end{align*}
and
\begin{align*}
U_0(u)V_1(u)-U_1(u)V_0(u)&=b\sin^2c_0(u)-2c_0(u)c_1(u) \\
&\quad+2c_1(u)\sin c_0(u)\cos c_0(u).
\end{align*}

Set 
$$
I_1(u)=U_2(u)V_0(u)-U_0(u)V_2(u),
$$
$$
I_2(u)=u^{-2}\left[4kuV_0(u)+2u^2V_1(u)\right],
$$
and
$$
I_3(u)=U_0(u)V_1(u)-U_1(u)V_0(u).
$$
Next, taking into account (\ref{E:finni}), (\ref{E:frr}), and using the explicit expressions for the functions $c_i$, $U_i$, and $V_1$ with $i=0,1,2$, which were found above, we obtain (\ref{E:finik1}).

This completes the proof of Theorem \ref{T:fin}.
\begin{rem}\label{R:ho} \rm The present remark concerns the continuity of the functions $\Lambda^{(i)}$
with $i=0,1,2$ on their domain. Recall that 
$\Lambda^{(i)}(0)=0$. It follows from Theorems \ref{T:te1}, \ref{T:coco1}, and \ref{T:fin} that the functions
$\Lambda^{(i)}$ are continuous on their domain with a possible exception of the point $u=0$. However, it 
is not hard to see, using
the explicit expressions for the functions $\Lambda^{(i)}$, provided in the theorems mentioned above, that
$$
\lim_{u\rightarrow 0}\Lambda^{(i)}(u)=0\quad\mbox{for}\quad i=0,1,2.
$$
\end{rem}

\end{document}